\newtheorem{mylem}{Lemma}
\newtheorem{mythem}{Theorem}
\newtheorem{mydef}{Definition}
\begin{document}
%
\title{On the Enumeration of Maximal $(\Delta, \gamma)$\mbox{-}\textit{Cliques} of a Temporal Network}



\author{\IEEEauthorblockN{Suman Banerjee\IEEEauthorrefmark{0}, Bithika Pal\thanks{Both the authors have equally contributed in this work.} \IEEEauthorrefmark{0}}
	
	\IEEEauthorblockA{\IEEEauthorrefmark{0}
		Department of Industrial and Systems Engineering, Indian Institute of Technology, Kharagpur, 721302, India,\\ Email: suman@iitkgp.ac.in, bithikapal@iitkgp.ac.in}

}

%


\maketitle

\begin{abstract}
	A \textit{temporal network} is a mathematical way of precisely representing a time varying relationship among a group of agents. In this paper, we introduce the notion of $(\Delta, \gamma)$\mbox{-}\textit{Cliques} of a temporal network, where every pair of vertices present in the clique communicates atleast $\gamma$ times in each $\Delta$ period within a given time duration. We present an algorithm for enumerating all such maximal cliques present in the network. We also implement the proposed algorithm with three human contact network data sets. Based on the obtained results, we analyze the data set on multiple values of $\Delta$ and $\gamma$, which helps in finding out contact groups with different frequencies.  
\end{abstract}

\begin{IEEEkeywords}
Temporal Network; Cliques; Graph; Enumeration Technique

\end{IEEEkeywords}

%
\IEEEpeerreviewmaketitle

\section{Introduction}
A group of agents and \textit{binary relation} among them can be mathematically formalized as a network (also known as graph). Analysis of such network for different topological properties forms the basis of  several domains namely \textit{Social Network Analysis}, \textit{Computational Biology} \cite{hulovatyy2015exploring}, \textit{Epidemiology} \cite{masuda2017temporal}. One such topological property could be maximally connected subgraphs; popularly known as cliques. Finding the clique of maximum cardinality in a network is a well known NP\mbox{-}Complete Problem \cite{garey2002computers}. However, more general problem in network analysis could be not only just finding the maximum size clique, but also enumerate all maximal cliques present in the network. Bron et al. \cite{bron1973algorithm} first proposed an enumeration algorithm for maximal cliques in the network which forms the basis of study on this problem. Later, there were advancements for this problem for different types of networks \cite{cheng2012fast}, \cite{eppstein2011listing}, \cite{eppstein2013listing1} etc.  
\par Most of the real life networks from social to biological are \textit{time varying}, which means that the existence of relationship between any two agents changes with time. Temporal networks \cite{holme2012temporal} (also known as \textit{time varying networks} or \textit{link streams}) are the mathematical tools used for precisely representing these time varying relationships. For such kind of networks, a natural extension of clique is the \textit{temporal clique} which comprises of a set of vertices along with a time duration. 
\subsection{Related Work}
This work is closely related to the \textit{$\Delta$\mbox{-}Clique Enumaration Problem} of a temporal network introduced by Virad et al. \cite{viard2015revealing} \cite{viard2016computing}. For a given value of ($\Delta$, $\gamma$)\mbox{-}Clique is defined as a set of vertices of the network with a time interval, such that every pair of veritices of the set has at least one edge in every $\Delta$ time interval. Based on their proposed algorithm, they analyzed contact relationship among a group of students and showed that it brings a different interpretation in  their communication pattern \cite{viard2015revealing}. Later, Himmel et al. \cite{himmel2016enumerating} adopted \textit{Bron\mbox{-}Kerbosch Algorithm} for maximal clique enumeration and proposed its temporal version. Results reported in \cite{himmel2016enumerating} show that, their algorithm performs much better than that of in \cite{viard2016computing} in terms of worst case computational time analysis as well as in experimentaion with real life data sets. Recently, Mukherjee et al. \cite{mukherjee2017enumeration} \cite{mukherjee2015mining} studied maximal clique enumeration problem of an uncertain graph. They introduced the notion of \textit{$\alpha$\mbox{-}maximal clique} in an uncertain graph and also proposed an enumeration algorithm for all such maximal cliques.
\subsection{Our Contribution}
For a group of human with their time varying relationship represented as a temporal network, its a natural question which set of people contact frequently among themselves? At minimum how many times they contact within a time interval? Motivated by such questions, in this paper, we introduce the notion of $(\Delta, \gamma)$\mbox{-}\textit{Cliques} which is defined as the set of vertices with a time interval where every pair of users of the set has atleast $\gamma$ intersections in each $\Delta$ time interval. Particularly, we make the following contributions in this paper:
\begin{itemize}
	\item We define the problem of ``\textit{enumeration of ($\Delta, \gamma$)\mbox{-}Cliques}'' of a temporal networks.
	\item We propose an algorithm for enumerating all maximal ($\Delta, \gamma$)\mbox{-}Cliques with detailed analysis and theoretical properties.
	\item We implement the proposed algorithm on three human contact network data sets and investigate the deeper insights of contact pattern. 
\end{itemize}
\subsection{Organization of the Paper} 
Rest of the paper is organized as follows: Section \ref{Sec:Preli} describes some preliminary concepts that is required to understand the rest of the paper. In Section \ref{Sec:DG}, we introduce the notion of  $(\Delta, \gamma)$\mbox{-}Clique of a temporal network and its various properties. Section \ref{EMC} contains our proposed enumeration algorithm of all maximal ($\Delta, \gamma$)\mbox{-}Cliques and its detailed analysis. Section \ref{ED} contains experimental details which covers description of datasets,  obtained results from the experiment and their discussions. Finally, Section \ref{CFD} concludes our works and gives future directions.

\section{Preliminaries} \label{Sec:Preli}
In this section we describe some preliminary concepts, which will form the basis to understand the work presented in the subsequent sections of this paper. A \textit{temporal network} is a graph whose edges are associated with a \textit{time stamp} to denote the time at which the edge appeared. Formally, it is defined as follows: 
\begin{mydef}[Temporal Network] \cite{holme2013temporal}
	Temporal network (also known as time varying graphs, link streams) is defined as a triplet $\mathcal{G}(V, E, \mathcal{T})$, where $V(\mathcal{G})$ and $E(\mathcal{G})$ ($E(\mathcal{G}) \subset \binom {V(\mathcal{G})}2 \times \mathcal{T}$) are the vertex and edge set of the network. $\mathcal{T}$ is a function which assigns each edge to its occurrence time stamp.
\end{mydef}
Figure \ref{Fig:TG} shows the time varying links of a temporal network. Suppose, the network $\mathcal{G}$ is observed in discrete time steps (spaced by $dt$) starting from the time $t$ and continued till time $t^{'}$, i.e., $\mathbb{T}=\{t, t+dt, t+2dt, \dots , t^{'} \}$ (suppose, $t^{'}=t+ndt$) Hence, $\mathcal{T} : E(\mathcal{G}) \longrightarrow \mathbb{T}$. Each edge of $\mathcal{G}$ is of the form $(v_i, v_j, t_{ij})$ signifying that there is an edge between the nodes $v_i$ and $v_j$ at time $t_{ij}$. The difference, $t^{'}-t$ is known as the \textit{lifetime} of the network and it is denoted as $\textbf{T}$. In our case $\textbf{T}=ndt$. We say $(v_iv_j)$ is a static edge of $\mathcal{G}$, if $(v_i, v_j, t_{ij}) \in E(\mathcal{G})$ for some $t_{ij} \in \mathbb{T}$. We define the frequency of an edge as the number of times the edge has occurred in the entire lifetime of the network. We define the frequency of the static edge $(v_iv_j)$ as $f_{v_iv_j}$. Viard et al. introduced the notion of $\Delta$\mbox{-}Cliques of a temporal networks  which is a natural extension of cliques in a static network. 
\begin{mydef}[$\Delta$\mbox{-}Clique]
	For a given time period $\Delta$, a $\Delta$\mbox{-}Clique of the temporal network $\mathcal{G}$ is a vertex set, time interval pair, i.e., $(\mathcal{X}, [t_a,t_b])$ with $\mathcal{X} \subset V(\mathcal{G})$, $\vert \mathcal{X} \vert \geq 2$ and $[t_a,t_b] \subset \mathbb{T}$, such that $\forall v_i,v_j \in \mathcal{X}$ and $\tau \in [t_a, max(t_b - \Delta, t_a)]$ there is an edge $(v_i, v_j, t_{ij}) \in E(\mathcal{G})$ with $t_{ij} \in [\tau, min (\tau + \Delta, t_b)]$.
\end{mydef}
\section{$(\Delta, \gamma)$\mbox{-}Clique of a Temporal Network}\label{Sec:DG}
In this section, we introduce the notion of $(\Delta, \gamma)$\mbox{-}Clique of a temporal network
\begin{mydef}[$(\Delta, \gamma)$\mbox{-}Clique]\label{Def:DG}
	For a given time period $\Delta$ and $\gamma \in \mathbb{Z}^{+}$, a $(\Delta, \gamma)$\mbox{-}Clique of the temporal network $\mathcal{G}$ is a vertex set, time interval pair, i.e., $(\mathcal{X}, [t_a,t_b])$ where $\mathcal{X} \subseteq V(\mathcal{G})$, $\vert \mathcal{X} \vert \geq 2$, and  $[t_a,t_b] \subseteq \mathbb{T}$. Here $\forall v_i,v_j \in \mathcal{X}$ and $\tau \in [t_a, max(t_b - \Delta, t_a)]$, there must exist $\gamma$ or more number of edges, i.e., $(v_i, v_j, t_{ij}) \in E(\mathcal{G})$ and $f_{(v_iv_j)} \geq \gamma$ with $t_{ij} \in [\tau, min (\tau + \Delta, t_b)]$.
\end{mydef}

\begin{figure}
	\centering
	\resizebox{6.5 cm}{1.8 cm}{\includegraphics{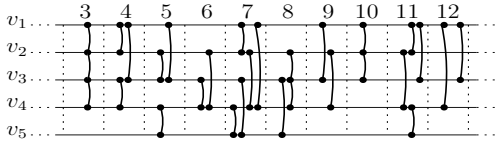}}
	\caption{Links of a Temporal Network}
	\label{Fig:TG}
\end{figure}

\begin{figure}
	\centering
	\resizebox{6.5 cm}{1.8 cm}{\includegraphics{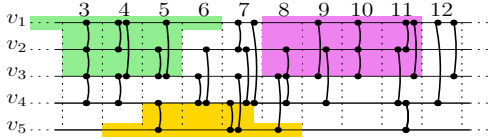}  }
	\caption{Cliques of the Figure \ref{Fig:TG}: $(\{v_1, v_2, v_3\},[2,6])$, $(\{v_1, v_2, v_3\},[8,11])$ and $(\{v_4, v_5\},[4,8])$.}
	\label{Fig:DGClique}
\end{figure}
Figure \ref{Fig:TG} is an example of a time varying network. Figure \ref{Fig:DGClique} shows some $(\Delta, \gamma)$\mbox{-}Cliques for the temporal network of Figure \ref{Fig:TG} with $\Delta=3$ and $\gamma=2$.  Now, from the Definition \ref{Def:DG}, it is easy to observe, that a $(\Delta, \gamma)$\mbox{-}Clique will be a $\Delta$\mbox{-}Clique when $\gamma=1$. 

For a static network $G(V, E)$, a clique $\mathcal{S} \subset V(G)$ is maximal if $v \in V(G) \setminus \mathcal{S}$, $\mathcal{S} \cup \{v\}$ is not a clique. But in case of $(\Delta, \gamma)$\mbox{-}Clique as it is defined in the context of a temporal network, so its maximality has to be decided based on both its \textit{cardinality} and \textit{time interval}. By considering both the factors, we define the maximality condition for an arbitrary $(\Delta, \gamma)$\mbox{-}Clique in Definition \ref{Def:MDG}.
\begin{mydef}[Maximal $(\Delta, \gamma)$\mbox{-}Clique] \label{Def:MDG} \label{Def:maximal}
	A $(\Delta, \gamma)$\mbox{-}Clique $(\mathcal{X}, [t_a,t_b])$ of the temporal network $\mathcal{G}(V, E, \mathcal{T})$ will be maximal if neither of the following is true.
	\begin{itemize}
		\item $\exists v \in V(\mathcal{G}) \setminus \mathcal{X}$ such that $(\mathcal{X} \cup \{v\}, [t_a,t_b])$ is a $(\Delta, \gamma)$\mbox{-}Clique.
		\item $(\mathcal{X}, [t_a - dt,t_b])$ is a $(\Delta, \gamma)$\mbox{-}Clique. This condition is applied only if $t_a - dt \geq t$.
		\item $(\mathcal{X}, [t_a,t_b + dt])$ is a $(\Delta, \gamma)$\mbox{-}Clique. This condition is applied only if $t_b + dt \leq t^{'}$.
	\end{itemize}
\end{mydef}

From the definition of maximal $(\Delta, \gamma)$\mbox{-}Clique it is  easy to observe that first condition  is regarding the cardinality whereas second and third one for time duration. For static graphs, among all the maximal cliques, one whose cardinality is maximum is maximal clique. However, in our case maximum can be in terms of either time duration or cardinality. Hence, maximum $(\Delta, \gamma)$\mbox{-}Clique of temporal network is defined as follows. 
\begin{mydef}[Maximum $(\Delta, \gamma)$\mbox{-}Clique]
	Let $\mathcal{S}$ be the set of all maximal $(\Delta, \gamma)$\mbox{-}Cliques of the temporal network $\mathcal{G}(V, E, \mathcal{T})$. Now, $(\mathcal{X}, [t_a,t_b]) \in \mathcal{S}$ will be
	\begin{itemize}
		\item temporally maximum if $\forall (\mathcal{Y}, [t_a^{'},t_b^{'}]) \in \mathcal{S} \setminus (\mathcal{X}, [t_a,t_b])$, $t_b-t_a \geq t_b^{'} - t_a^{'}$.
		\item cardinally maximum if $\forall (\mathcal{Y}, [t_a^{'},t_b^{'}]) \in \mathcal{S} \setminus (\mathcal{X}, [t_a,t_b])$, $\vert \mathcal{X} \vert \geq \vert \mathcal{Y} \vert$.
	\end{itemize}
\end{mydef}
\section{Enumeration of Maximal $(\Delta, \gamma)$\mbox{-}Clique}\label{EMC}
In this section, we present our proposed enumeration algorithm. The idea is based on the enumeration of maximal cliques in a static graph \cite{johnson1988generating}. It is divided into two parts. First, we initialize all the trivial $(\Delta, \gamma)$\mbox{-}Cliques (Algorithm \ref{Algo:Ini}) and then we try to expand them both in cardinality and temporally till maximality is not reached by Definition \ref{Def:maximal} (Algorithm \ref{Algo:Enum}).

\begin{algorithm} \label{Algo:Ini}
	\SetAlgoLined
	\KwData{A Temporal Graph $\mathcal{G}(V, E, \mathcal{T}), \ \Delta, \ \gamma$.}
	\KwResult{Initial Clique Set.}
	$\text{Prepare the Dictionary }\mathcal{D}$\;
	$\mathcal{C}=\{ \}$\;
	\For{\text{All} $(uv) \in \mathcal{D}.Keys$}{
		\If{$f_{(uv)} \geq \gamma$}{
			$\mathcal{T}_{(uv)}=\text{Time Stamps of } (uv)$\;
			\For{ $i=1$ to $len(\mathcal{T}_{(uv)})-\gamma+1$}{
				\If{$\mathcal{T}_{(uv)}[i+\gamma-1] - \mathcal{T}_{(uv)}[i] =\Delta$}{
					$\mathcal{C}=\mathcal{C} \cup (\{u,v\},[\mathcal{T}_{(uv)}[i],\mathcal{T}_{(uv)}[i+\gamma-1]])$\;
				}
				\If{$\mathcal{T}_{(uv)}[i+\gamma-1] - \mathcal{T}_{(uv)}[i] <\Delta$}{
					$C_1=(\{u,v\},[\mathcal{T}_{(uv)}[i],\mathcal{T}_{(uv)}[i]+\Delta])$\;
					$C_2=(\{u,v\},[\mathcal{T}_{(uv)}[i+\gamma-1]- \Delta,\mathcal{T}_{(uv)}[i+\gamma-1]])$\;
					$\mathcal{C}=\mathcal{C} \cup \{ C_1, C_2  \}$\;
				}
			}          
		}
	}
	\caption{Initialization of the $(\Delta, \gamma)$\mbox{-}Clique}
\end{algorithm}

In Algorithm \ref{Algo:Ini} to initialize the trivial $(\Delta, \gamma)$\mbox{-}Cliques, we create the dictionary $\mathcal{D}$ where the static edges (vertex pairs) are the keys and the time stamps at which they occur are the values. Then, we define an empty set, $\mathcal{C}$, to store all the initial $(\Delta, \gamma)$\mbox{-}Cliques. Next, for each static edge $(uv)$ present in $\mathcal{D}$, if its frequency is atleast $\gamma$, we put the corresponding dictionary values in a list denoted by $\mathcal{T}_{(uv)}$. Now, in $\mathcal{T}_{(uv)}$, let us consider any particular consecutive $\gamma$ occurrences and denote it as $p$. If $p$ appears exactly in $\Delta$ duration, then we add a clique in $\mathcal{C}$ with the vertex pair $\{u,v\}$ and the time duration $[t_a,t_b]$, where $t_a$ and $t_b$ are the first and last appearing time stamps of $p$. Otherwise, if the duration of $p$ is less than $\Delta$, we add two cliques $C_1$ and $C_2$ with the same vertex pair. For $C_1$, $t_a$ is the first appearing time stamp in $p$ and $t_b$ is $t_a+\Delta$, and for $C_2$, $t_b$ is the last occurring time stamp in $p$ and $t_a$ is $t_b-\Delta$ (\textbf{for}\mbox{-}loop from line number 6 to 15). The same procedure is repeated for each consecutive $\gamma$ occurrences for all static edges present in the dictionary $\mathcal{D}$. Now, we make the following observation for the initial $(\Delta, \gamma)$\mbox{-}Cliques formed by Algorithm \ref{Algo:Ini}.

\begin{mylem}\label{Lemma:ini}
	For each clique $(\mathcal{X}, [t_a,t_b]) \in \mathcal{C}$ in Algorithm \ref{Algo:Ini}, the following relations will always hold:
	\begin{enumerate}[(a)]
		\item  $\vert \mathcal{X} \vert = 2$
		\item  $u,v \in \mathcal{X}, \ \ f_{(uv)}=\gamma$
		\item  $t_b - t_a = \Delta$         
	\end{enumerate}	
\end{mylem}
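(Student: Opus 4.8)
The plan is to argue by \emph{structural case analysis} on how an element can enter $\mathcal{C}$. Algorithm~\ref{Algo:Ini} writes into $\mathcal{C}$ at only three program points: the insertion in the branch where the $\gamma$ consecutive stamps span exactly $\Delta$, and the two insertions of $C_1$ and $C_2$ in the branch where they span strictly less than $\Delta$. Hence every member of $\mathcal{C}$ has one of three canonical forms, and it suffices to verify (a)--(c) for each. Each such clique is produced inside the outer loop for a single dictionary key $(uv)$ with $f_{(uv)}\ge\gamma$ and inside the inner loop at a fixed index $i$, so I would fix $(uv)$ and $i$ throughout and let $\mathcal{T}_{(uv)}$ denote the sorted list of occurrence time stamps of $(uv)$.

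Parts (a) and (c) are immediate and I would dispatch them first. For (a), in all three forms the vertex set is literally $\{u,v\}$, and since $(uv)$ is a key of $\mathcal{D}$ it is a static edge joining two \emph{distinct} vertices, so $|\mathcal{X}|=2$. For (c) I would substitute endpoints directly: in the exactly-$\Delta$ branch, $t_b-t_a=\mathcal{T}_{(uv)}[i+\gamma-1]-\mathcal{T}_{(uv)}[i]=\Delta$ by the guard of that branch; for $C_1$, $t_b-t_a=\bigl(\mathcal{T}_{(uv)}[i]+\Delta\bigr)-\mathcal{T}_{(uv)}[i]=\Delta$; and for $C_2$, $t_b-t_a=\mathcal{T}_{(uv)}[i+\gamma-1]-\bigl(\mathcal{T}_{(uv)}[i+\gamma-1]-\Delta\bigr)=\Delta$. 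Thus $t_b-t_a=\Delta$ in every case.

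Part (b) is the delicate one and I expect it to be the main obstacle, chiefly because it hinges on the intended reading of $f_{(uv)}$. I would read it as the number of occurrences of the edge $(uv)$ falling inside the clique's own interval $[t_a,t_b]$. The key structural fact is that $\mathcal{T}_{(uv)}$ is sorted, so for any interval the occurrences it contains form a \emph{contiguous block} of indices, and counting reduces to locating the first and last such index. For the exactly-$\Delta$ clique the endpoints \emph{are} the occurrence times $\mathcal{T}_{(uv)}[i]$ and $\mathcal{T}_{(uv)}[i+\gamma-1]$, and as these are consecutive in the sorted list the block is exactly $\{i,\dots,i+\gamma-1\}$, yielding precisely $\gamma$ occurrences. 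The real work is the $C_1,C_2$ sub-cases: here the interval is stretched to full length $\Delta$ while the $\gamma$ chosen stamps only span $<\Delta$, so I must rule out a further stamp $\mathcal{T}_{(uv)}[i+\gamma]$ (for $C_1$) or $\mathcal{T}_{(uv)}[i-1]$ (for $C_2$) sliding into the widened window. This is exactly where the argument is most fragile. I would make it go through either by invoking the tightness built into the construction (each window is pinned flush to one endpoint, so the $\gamma$ stamps occupy one extreme of the $\Delta$-interval), or, if an extra stamp can in principle intrude, by reinterpreting $f_{(uv)}$ as the count over the \emph{defining} window $[\mathcal{T}_{(uv)}[i],\mathcal{T}_{(uv)}[i+\gamma-1]]$ of $\gamma$ consecutive stamps, for which the count is $\gamma$ by construction. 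Pinning down this interpretation, and confirming that no additional occurrence is captured by the $\Delta$-stretched endpoints of $C_1$ and $C_2$, is the crux; the remainder is routine bookkeeping over the three insertion sites.
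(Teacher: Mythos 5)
Your proof takes essentially the same route as the paper's: a direct verification at the three insertion sites, with (a) and (c) following by substitution exactly as you write them. The one place you hedge --- part (b) for $C_1$ and $C_2$ --- is the right place to worry, so let me settle it. The ``tightness'' escape does not work: take $\gamma=2$, $\Delta=3$ and occurrence times $1,2,3$ for $(uv)$; at $i=1$ the algorithm inserts $C_1=(\{u,v\},[1,4])$, whose interval contains three occurrences of $(uv)$, so under the interval-based reading of $f_{(uv)}$ claim (b) is simply false. The only consistent reading is your second one, and it is the one the paper's own proof silently adopts: its entire justification of (b) is that ``we are starting with a static edge and picking each consecutive $\gamma$ occurrences, \dots frequency of the edges will be exactly $\gamma$,'' i.e.\ $f_{(uv)}=\gamma$ is asserted of the chosen block of $\gamma$ consecutive stamps, not of everything that lands in the $\Delta$-stretched window $[t_a,t_b]$. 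With that interpretation fixed, your case analysis is complete and coincides with the paper's argument; the extra scrutiny you applied to (b) is a genuine (if minor) improvement in rigor over the published proof, which never acknowledges that an extra stamp can intrude into the widened interval.
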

\begin{proof}
	To initialize a $(\Delta, \gamma)$\mbox{-}Clique, it is trivial that the time span of the clique should be of minimum $\Delta$ duration and each pair of vertices of the clique should appear at least $\gamma$ times within each $\Delta$ duration (as per Definition \ref{Def:DG}). As in Algorithm \ref{Algo:Ini}, we are starting with a static edge and picking each consecutive $\gamma$ occurrences, cardinality of the clique will be 2 and frequency of the edges will be exactly $\gamma$. This proves the conditions $(a)$ and $(b)$. Now, the cliques that are added satisfying the equality condition at line number 7 of Algorithm \ref{Algo:Ini}, naturally they are of $\Delta$ duration. However, if a particular $\gamma$ occurrences happen within $\Delta$ time span, we are adding two cliques of exactly $\Delta$ duration. One is forwarding the time span as first appearing time stamp plus $\Delta$ and another is backwarding the time span as last appearing time stamp minus $\Delta$ (at line number 11 and 12). This proves the condition $(c)$. 
\end{proof}

\par Now, we analyze the time requirement for our initialization process. Preparing the dictionary in Line number 1 requires $\mathcal{O}(\underset{(u,v,t) \in E(\mathcal{G})}{\sum}f_{(uv)})$ time. Assuming frequency of all the edges is atleast $\gamma$, inner loop at Line 6 will take $\mathcal{O}(f_{(uv)})$ time. Hence, for processing all the edges to build the set of initial $(\Delta,\gamma)$\mbox{-}Cliques, $\mathcal{C}$, running time is $\mathcal{O}(\underset{(uv) \in E(\mathcal{G})}{\sum}f_{(uv)})$. So, the total time complexity of Algorithm \ref{Algo:Ini} is $\mathcal{O}(\underset{(uv) \in E(\mathcal{G})}{\sum}f_{(uv)} + \underset{(uv) \in E(\mathcal{G})}{\sum}f_{(uv)})=\mathcal{O}(\underset{(uv) \in E(\mathcal{G})}{\sum}f_{(uv)})$. If we sum up the frequencies of all the static edges, we get the number of temporal edges, i.e., the number of triplets $(u,v,t) \in E(\mathcal{G})$. Let, $\vert E(\mathcal{G}) \vert=m$. Hence running time  of Algorithm \ref{Algo:Ini} is mentioned below.
\begin{mylem}
Running time of the initialization process described in Algorithm 1 is $\mathcal{O}(m)$. 
\end{mylem}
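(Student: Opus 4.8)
The plan is to aggregate the per-line cost estimates already isolated in the paragraph preceding the statement; the argument is essentially bookkeeping rather than a new combinatorial observation, so I would keep it short. First I would fix the notation used in the claim: writing $m = \vert E(\mathcal{G}) \vert$ for the number of temporal edges (triplets $(u,v,t)$), and recalling that $f_{(uv)}$ counts the time stamps at which the static edge $(uv)$ occurs, I would record the one identity that drives the whole estimate, namely
\[
\sum_{(uv)} f_{(uv)} = m,
\]
which holds because each temporal edge $(u,v,t)$ contributes exactly one unit to the frequency of its own static edge $(uv)$ and to no other.

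Next I would split the running time of Algorithm~\ref{Algo:Ini} into two phases and bound each by $\mathcal{O}(m)$. For the first phase, building the dictionary $\mathcal{D}$ in Line~1, each temporal edge is read once and bucketed under its static-edge key, so the cost is $\mathcal{O}(m)$ provided key lookup and insertion are treated as $\mathcal{O}(1)$. For the second phase, the main loop, I would argue that for a fixed static edge $(uv)$ the outer \textbf{if} is entered only when $f_{(uv)} \geq \gamma$, and then the inner \textbf{for} of Lines~6--15 executes $f_{(uv)} - \gamma + 1 = \mathcal{O}(f_{(uv)})$ iterations, each performing a constant number of index comparisons (Lines~7 and~10) and at most two insertions into $\mathcal{C}$. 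Hence the work attributable to $(uv)$ is $\mathcal{O}(f_{(uv)})$, and summing over all static edges gives $\mathcal{O}(\sum_{(uv)} f_{(uv)})$.

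Finally I would combine the pieces: by the identity above the second phase is $\mathcal{O}(\sum_{(uv)} f_{(uv)}) = \mathcal{O}(m)$, and adding the $\mathcal{O}(m)$ cost of the first phase leaves the total at $\mathcal{O}(m) + \mathcal{O}(m) = \mathcal{O}(m)$, as claimed. There is no genuine obstacle here; the only points that need to be stated explicitly rather than proved are the constant-time assumptions underpinning the estimate — that the time-stamp lists $\mathcal{T}_{(uv)}$ are stored so that the indexed accesses $\mathcal{T}_{(uv)}[\cdot]$ and the additions to $\mathcal{C}$ cost $\mathcal{O}(1)$ (amortized), and that $\mathcal{D}$ is built by hashing in time linear in the number of temporal edges. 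Making these data-structure conventions explicit is what turns the informal line-by-line count into the clean $\mathcal{O}(m)$ bound.
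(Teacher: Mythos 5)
Your proposal is correct and follows essentially the same route as the paper's analysis (given in the paragraph preceding the lemma): bound the dictionary construction and the main loop separately by $\mathcal{O}\bigl(\sum_{(uv)} f_{(uv)}\bigr)$ and then use the identity $\sum_{(uv)} f_{(uv)} = m$ to conclude. Your version is slightly more careful in making explicit the constant-time data-structure assumptions (hashing for $\mathcal{D}$, $\mathcal{O}(1)$ indexed access to $\mathcal{T}_{(uv)}$) that the paper leaves implicit, but the decomposition and the key counting argument are identical.
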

\begin{algorithm} \label{Algo:Enum}
	\SetAlgoLined
	\KwData{$\mathcal{G}(V, E, \mathcal{T}), \ \text{Initial Clique Set} \ \mathcal{(C)}, \ \Delta, \ \gamma$.}
	\KwResult{Set of All Maximal $(\Delta, \gamma)$\mbox{-}Clique ($\mathcal{C_R}$). }
	$\mathcal{C_R}=\phi, \ \mathcal{C_I}=\mathcal{C}$\;
	 \While{$\mathcal{C} \neq  \phi$}{
	 	take and remove ($\mathcal{X}$, [$t_a,t_b$]) from $\mathcal{C}$\;
	 	Prepare the Static Graph $G$ for the duration [$t_a,t_b$]\;
	 	$Is\_Maximal = TRUE$\;
	 	\For{\text{All} $v \in N_G(\mathcal{X}) \setminus \mathcal{X}$}{
	 		\If{$(\mathcal{X} \cup \{v\}, [t_a, t_b])$ is a $(\Delta, \gamma)$\mbox{-}Clique}{
	 			$Is\_Maximal = FALSE$\;
	 			\If{$(\mathcal{X} \cup \{v\}, [t_a, t_b]) \notin \mathcal{C_I}$}{
	 				add $(\mathcal{X} \cup \{v\}, [t_a, t_b])$ to $\mathcal{C}$ and $\mathcal{C_I}$\;
	 			}	
	 		}
 	   }

      $t_{al} = max_{u,v \in \mathcal{X}} \ t_{auv}$  \tcp*{Latest first $\gamma^{th}$ occurrence time of an edge in $(\mathcal{X}, [t_a, t_b])$}
      $t_{a^{'}} = t_{al}-\Delta$\;
      \If{$t_{a^{'}} \neq t_a$}{
      	$Is\_Maximal = FALSE$\;
      	\If{$(\mathcal{X}, [t_{a^{'}}, t_b]) \notin \mathcal{C_I}$}{
      		add $(\mathcal{X}, [t_{a^{'}}, t_b])$ to $\mathcal{C}$ and $\mathcal{C_I}$\;
      	}
      }

   $t_{br} = min_{u,v \in \mathcal{X}} \ t_{buv}$  \tcp*{Earliest last $\gamma^{th}$ occurrence time of an edge in $(\mathcal{X}, [t_a, t_b])$}
  $t_{b^{'}} = t_{br}+\Delta$\;
  \If{$t_{b^{'}} \neq t_b$}{
  	$Is\_Maximal = FALSE$\;
  	\If{$(\mathcal{X}, [t_a, t_{b^{'}}]) \notin \mathcal{C_I}$}{
  		add $(\mathcal{X}, [t_a, t_{b^{'}}])$ to $\mathcal{C}$ and $\mathcal{C_I}$\;
  	}
  }
      
      \If{$Is\_Maximal == TRUE$}{
      	add $(\mathcal{X}, [t_a, t_b])$ to $\mathcal{C_R}$\;
      }
      
 }
\caption{Enumeration Algorithm for Maximal $(\Delta, \gamma)$\mbox{-}Cliques}
\end{algorithm}

\begin{figure*}
\centering
\includegraphics[scale=0.7]{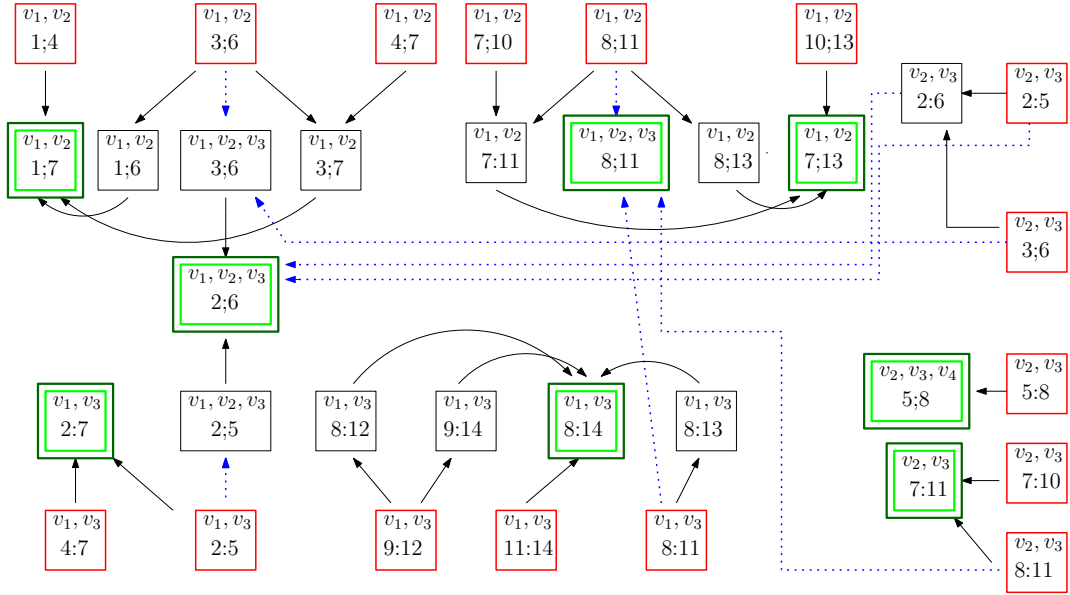}
\caption{Sequence of $(\Delta, \gamma)$\mbox{-}Cliques built by Algorithm \ref{Algo:Enum} for the example mentioned in Figure \ref{Fig:TG} with $\Delta=3$ and $\gamma=2$. We just show the steps for the vertices $v_1$, $v_2$ and $v_3$ only. The red marked boxes denote  the initial $(\Delta, \gamma)$\mbox{-}Cliques returned by Algorithm \ref{Algo:Ini} and green colored boxes denote the maximal $(\Delta, \gamma)$\mbox{-}Cliques. Black solid arrows show time duration expansion and blue dotted arrows show the vertex addition.} 
\label{Fig:Demo}
\end{figure*}

Algorithm \ref{Algo:Enum} describes the enumeration procedure of maximal $(\Delta, \gamma)$\mbox{-}Cliques. For this process, three clique sets $\mathcal{C}$ (for holding the cliques yet to be processed), $\mathcal{C}_{\mathcal{I}}$ (for keeping the cliques already or yet to be processed) and $\mathcal{C}_{\mathcal{R}}$ (for storing the maximal cliques) are maintained. Now, it works as follows. First, it takes out one clique $C_i=(\mathcal{X}, [t_a, t_b])$ from $\mathcal{C}$ and tries to expand (if possible) in any of the following three ways.
\begin{itemize}

\item First one is the addition of nodes (from $N_G(\mathcal{X})$) which is not currently in $\mathcal{X}$ (from Line number 6 to 13). If the addition of a node forms a $(\Delta, \gamma)$\mbox{-}Clique within $t_a$ to $t_b$, we add the new clique to $\mathcal{C}$ and $\mathcal{C}_{\mathcal{I}}$ for further processing and set the Is\_Maximal flag to False, so that $C_i$ can not be added to $\mathcal{C}_{\mathcal{R}}$.

\item Second one is the expansion of the duration in the right side of the time horizon (from Line number 22 to 29). For this expansion, we choose to progress by $\Delta$ duration only, in one iteration. Now, the question arises, from where the $\Delta$ is to be added. If $\Delta$ is added to $t_b$, there is no guarantee that $\forall u,v \in \mathcal{X}$, there will be $\gamma$ edges in $[t_b, t_b+\Delta]$. So, the way is to find the time where all the edges have occurred at least $\gamma$ times from the end ($t_b$) towards $t_a$. For this purpose, we use $t_{buv}$, a list, to store the time on which an edge $(u,v)$ has occurred $\gamma^{th}$ time from last. If we take the minimum (earliest) of $t_{buv}$ as $t_{br}$, that ensures, from $t_{br}$ to $t_b$, there are $\gamma$ edges $\forall u,v \in \mathcal{X}$. Now, if $t_{br}+\Delta > t_b$, the expansion is possible by the definition of $(\Delta, \gamma)$\mbox{-}Clique. The new clique $(\mathcal{X}, [t_a, t_{b^{'}}])$ (where $t_{b^{'}}=t_{br}+\Delta$) is added to $\mathcal{C}$ and $\mathcal{C}_{\mathcal{I}}$ for further processing and the Is\_Maximal flag is set to be False, so that $C_i$ can not be added to $\mathcal{C}_{\mathcal{R}}$.

\item Third one is the expansion of the duration in the left side of the time horizon (from Line number 14 to 21). Similar to the second case, a list, $t_{auv}$, is taken to keep the time at which a static edge $(u,v)$ has occurred $\gamma^{th}$ time from the first ($t_a$). Now, the maximum (latest) of $t_{auv}$ (i.e., $t_{al}$) tells that from $t_a$ to $t_{al}$ there is $\gamma$ occurrences $\forall u,v \in \mathcal{X}$. If $t_{al}-\Delta < t_a$, the new clique $(\mathcal{X}, [t_{a^{'}}, t_b])$ (where $t_{a^{'}}=t_{al}-\Delta$) is added as follows from above.

\end{itemize} 
This process is iterated until $\mathcal{C}$ is empty and finally $\mathcal{C}_{\mathcal{R}}$ contains all the maximal $(\Delta, \gamma)$\mbox{-}Cliques. Figure \ref{Fig:Demo} describes the procedure for building the maximal $(\Delta, \gamma)$\mbox{-}Cliques using Algorithm \ref{Algo:Ini}, \ref{Algo:Enum} for the vertices $v_1, v_2, v_3$ of the temporal network shown in Figure \ref{Fig:TG}.

$\blacktriangleright$ \textbf{Remarks:} One important point to mention here is while building the list $t_{buv}$, we are checking the next time stamp as well. For $u,v \in \mathcal{X}$, let $l_{uv}$ denotes the last $\gamma^{th}$ occurrence in $[t_a, t_b]$. Now, if there are $\gamma$ occurrences of $(u,v)$ between $l_{uv}+1$ to $l_{uv}+1+\Delta$, then $l_{uv}+1$ is kept in $t_{buv}$ instead of $l_{uv}$. This scenario can happen when $\Delta=\gamma+1$. Consider the temporal network shown in Figure \ref{Fig:TG}. When $\Delta=3,\ \gamma=2$, an initial clique $C_i=(\{v_1, v_2\}, [7,10])$ is generated by Algorithm \ref{Algo:Ini}. Now, for making $t_{buv}$, if we do not check from 8 ($l_{uv} + 1 $) to 11, $C_i$ can not be expanded from [7,10] to [7,11]. The same is checked for making $t_{auv}$ to see one time stamp backward.

\begin{mylem} \label{Lemma:Cdelga}
In Algorithm \ref{Algo:Enum}, the contents of $\mathcal{C}$ are $(\Delta,\gamma)$ Cliques.
\end{mylem}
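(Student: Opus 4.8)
The plan is to prove the stronger loop invariant that, at every moment during the execution of Algorithm~\ref{Algo:Enum}, every element currently stored in $\mathcal{C}$ satisfies Definition~\ref{Def:DG}, and then specialise it to the lemma. Since $\mathcal{C}$ is only ever modified by the initial assignment $\mathcal{C_I}=\mathcal{C}$ and by the three insertion sites inside the \textbf{while} loop, it suffices to (i) establish the invariant for the initial contents and (ii) show that each insertion preserves it. I would therefore argue by induction on the number of insertions performed so far.

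For the base case, the initial contents of $\mathcal{C}$ are exactly the set produced by Algorithm~\ref{Algo:Ini}. By Lemma~\ref{Lemma:ini} each such clique has the form $(\{u,v\},[t_a,t_b])$ with $f_{(uv)}=\gamma$ and $t_b-t_a=\Delta$. For a two-vertex interval of width exactly $\Delta$ the admissible range of $\tau$ in Definition~\ref{Def:DG} collapses to the single value $\tau=t_a$, whose window is $[t_a,t_b]$; the construction in Algorithm~\ref{Algo:Ini} places $\gamma$ occurrences of $(u,v)$ inside that window, so each initial clique is a genuine $(\Delta,\gamma)$\mbox{-}Clique and the base case holds.

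The inductive step splits according to the insertion site. The vertex-addition branch (Lines~6--13) adds $(\mathcal{X}\cup\{v\},[t_a,t_b])$ only under the explicit guard ``$(\mathcal{X}\cup\{v\},[t_a,t_b])$ is a $(\Delta,\gamma)$\mbox{-}Clique'', so that case is immediate. The two temporal-expansion branches are the substantive ones. Consider the right expansion (Lines~22--29), which inserts $(\mathcal{X},[t_a,t_{b'}])$ with $t_{b'}=t_{br}+\Delta$ and $t_{br}=\min_{u,v\in\mathcal{X}}t_{buv}$. By the inductive hypothesis $(\mathcal{X},[t_a,t_b])$ already satisfies Definition~\ref{Def:DG}, so for every $\tau\le t_b-\Delta$ the window $[\tau,\tau+\Delta]$ (unchanged by the extension) still carries $\gamma$ edges for each pair. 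The only newly exposed values are $\tau\in(t_b-\Delta,\,t_{br}]$, and here I would use the meaning of $t_{buv}$: the last $\gamma$ occurrences of $(u,v)$ lie in $[t_{buv},t_b]$, and since $t_{buv}\ge t_{br}\ge\tau$ while $t_b\le\tau+\Delta$ (because $\tau>t_b-\Delta$), that block sits inside $[\tau,\tau+\Delta]$, giving the required $\gamma$ edges for every pair simultaneously. The left expansion (Lines~14--21) is symmetric, using $t_{al}=\max_{u,v}t_{auv}$ and the fact that the first $\gamma$ occurrences of each pair lie in $[t_a,t_{auv}]\subseteq[t_a,t_{al}]$.

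The \textbf{main obstacle} I anticipate is not the sliding-window bookkeeping itself but justifying that the quantities $t_{buv}$ and $t_{auv}$ actually encode what the argument needs once the boundary subtlety described in the Remark is taken into account: when $\Delta=\gamma+1$ the stored index may be shifted by one time step (e.g.\ $l_{uv}+1$ rather than $l_{uv}$), and the proof must confirm that this shifted choice still keeps $\gamma$ occurrences inside the candidate window so that the $\min$/$\max$ aggregation over all pairs remains sound. I would isolate this in a short auxiliary claim characterising $t_{br}$ (resp.\ $t_{al}$) as the earliest (resp.\ latest) right (resp.\ left) endpoint for which the $\gamma$\mbox{-}in\mbox{-}$\Delta$ condition holds uniformly over all pairs of $\mathcal{X}$, after which the invariant, and hence the lemma, follows.
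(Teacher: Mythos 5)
Your proof is correct and follows essentially the same route as the paper's: induction over the loop, the base case supplied by Lemma~\ref{Lemma:ini}, the explicit guard disposing of the vertex-addition branch, and a window-containment argument based on $t_{al}$ and $t_{br}$ for the two temporal expansions. If anything, your version is slightly more explicit than the paper's in quantifying over the newly exposed values of $\tau$ and in flagging the off-by-one subtlety from the Remark, which the paper's proof glosses over.
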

\begin{proof}
We prove this statement by induction hypothesis on the iterations of the \textit{while} loop (from Line 2 to 33). By Lemma \ref{Lemma:ini}, initially the contents of $\mathcal{C}$ are $(\Delta,\gamma)$ Cliques. Now, we assume that, the contents of $\mathcal{C}$ at the end of the $i$\mbox{-}th iteration are $(\Delta,\gamma)$ Cliques. An existing clique, $C_{i}=(\mathcal{X},[t_a,t_b])$ of $\mathcal{C}$ may be modified in Line number 10, 19 and 27 of Algorithm \ref{Algo:Enum}. As per the condition imposed in Line number 7, if $C_{i}$ is modified in Line number 10 with new vertex addition, it results to a $(\Delta,\gamma)$ clique. Now, let us show that at Line number 19 $(\mathcal{X},[t_{al}-\Delta,t_{b}])$ is a $(\Delta,\gamma)$ Clique, where $t_{al}$ is  obtained from Line number 14. As, $C_i$ is a $(\Delta,\gamma)$\mbox{-}Clique, all possible static edges formed by the vertices in $\mathcal{X}$ occur atleast $\gamma$ times in every $\Delta$ duration from $t_{al}$ to $t_b$, where $t_{al}\geq t_{a}$. Moreover, since $t_{al}$ is the latest first $\gamma$\mbox{-}th occurrence time of an edge in $C_{i}$, for all $u,v \in \mathcal{X}$, there is necessarily $\gamma$ edges $(u,v,t)$ in $E(\mathcal{G})$ with $t_a\leq t \leq t_{al}$. Here, $t_{al} \leq t_a + \Delta$, otherwise $(\mathcal{X},[t_a,t_b])$ would not be a $(\Delta,\gamma)$\mbox{-}Clique. Therefore, an edge $(uv)$ occurs atleast $\gamma$ times between $t_{al}-\Delta$ and $t_{al}$, $ \forall u,v \in \mathcal{X}$. Finally, $\mathcal{X},[t_{al}-\Delta,t_{b}]$ is a $(\Delta,\gamma)$ Clique. 
\par Similar argument holds for Line number 22. Hence, clique modified in Line number 27 will also be a $(\Delta,\gamma)$ Clique. So, at the end of $(i+1)$\mbox{-}th iteration the all the cliques in $\mathcal{C}_{\mathcal{R}}$ are $(\Delta,\gamma)$ Cliques. This completes the proof. 
\end{proof}

\begin{figure*}[!ht]
\centering
\includegraphics[scale=0.8]{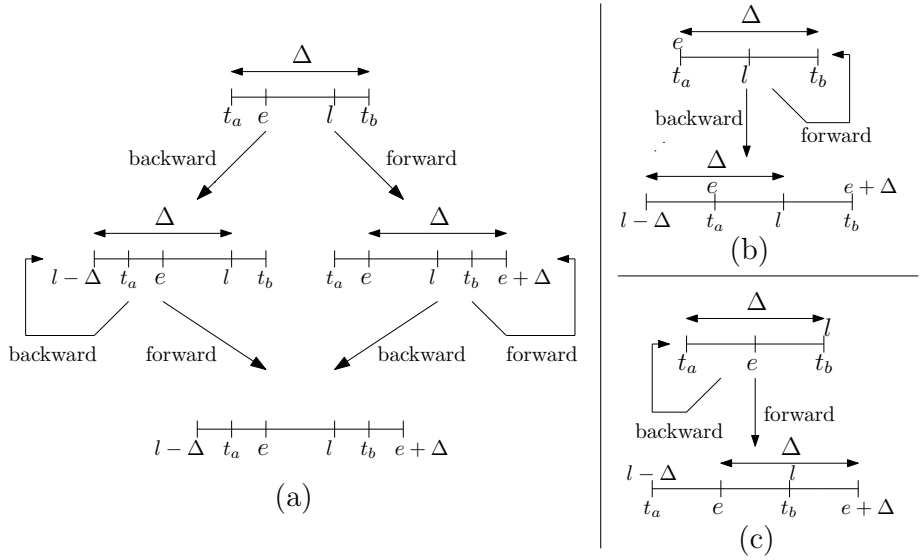} 
\caption{Diagram of expansion in time duration for a $(\Delta,\gamma)$\mbox{-}Clique of Lemma \ref{Lemma:iniall}. (a) represents Case-iv, (b) represents Case-iii and, (c) represents Case-ii.} 
\label{Fig:LD}
\end{figure*}

\begin{mylem} \label{Lemma:TH1}
All the elements of the set $\mathcal{C}_{\mathcal{R}}$ returned by Algorithm \ref{Algo:Enum} are maximal $(\Delta,\gamma)$\mbox{-}Cliques.
\end{mylem}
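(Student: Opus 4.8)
The plan is to establish \emph{soundness}: every clique that Algorithm \ref{Algo:Enum} deposits into $\mathcal{C}_{\mathcal{R}}$ satisfies all three defining conditions of a maximal $(\Delta,\gamma)$\mbox{-}Clique in Definition \ref{Def:MDG}. By Lemma \ref{Lemma:Cdelga} each such element is already a valid $(\Delta,\gamma)$\mbox{-}Clique, so only maximality remains. A clique $(\mathcal{X},[t_a,t_b])$ is written to $\mathcal{C}_{\mathcal{R}}$ exactly when the flag \textit{Is\_Maximal} survives as TRUE through the whole body of the \textit{while} iteration that processes it; inspecting Lines 6--13, 14--21 and 22--29 of Algorithm \ref{Algo:Enum}, this happens if and only if (i) no $v\in N_G(\mathcal{X})\setminus\mathcal{X}$ produced a $(\Delta,\gamma)$\mbox{-}Clique, (ii) $t_{a'}=t_a$, and (iii) $t_{b'}=t_b$. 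I would then show that each of these three facts falsifies the corresponding bullet of Definition \ref{Def:MDG}, handling the cardinality condition and the two temporal conditions in turn.

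For the cardinality condition I would first argue that scanning only $N_G(\mathcal{X})\setminus\mathcal{X}$ loses nothing: any vertex $v$ for which $(\mathcal{X}\cup\{v\},[t_a,t_b])$ could be a $(\Delta,\gamma)$\mbox{-}Clique must, in particular, share at least one temporal edge over $[t_a,t_b]$ with every vertex of $\mathcal{X}$, hence be a neighbour in the static graph $G$ built for $[t_a,t_b]$ and therefore lie in $N_G(\mathcal{X})$; a vertex outside $N_G(\mathcal{X})$ fails to connect to some $u\in\mathcal{X}$ and so has fewer than $\gamma$ occurrences with $u$. Consequently, \textit{Is\_Maximal} remaining TRUE after the vertex loop means that for \emph{every} candidate $v$ the test on Line 7 failed, which is exactly the negation of the first bullet of Definition \ref{Def:MDG}.

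For the two temporal conditions I would treat the left endpoint; the right endpoint is symmetric with $t_{br},t_{b'}$ in place of $t_{al},t_{a'}$. The crux is the claim: if $t_{a'}=t_a$, equivalently $t_{al}=t_a+\Delta$, then $(\mathcal{X},[t_a-dt,t_b])$ is not a $(\Delta,\gamma)$\mbox{-}Clique. Let $(u^{*},v^{*})$ be a pair attaining the maximum $t_{au^{*}v^{*}}=t_{al}$. If the shifted pair $(\mathcal{X},[t_a-dt,t_b])$ were a $(\Delta,\gamma)$\mbox{-}Clique then, taking $\tau=t_a-dt$ in Definition \ref{Def:DG}, the window $[t_a-dt,\,t_a-dt+\Delta]=[t_a-dt,\,t_{al}-dt]$ would have to contain $\gamma$ occurrences of $(u^{*},v^{*})$. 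But by the meaning of $t_{al}$ as the first $\gamma^{th}$ occurrence counted from $t_a$ (with the one\mbox{-}step look\mbox{-}back correction recorded in the Remark), at most $\gamma-1$ occurrences of $(u^{*},v^{*})$ lie in $[t_a-dt,\,t_{al}-dt]$, a contradiction. This refutes the second bullet; the mirror argument with $t_{b'}=t_b$ refutes the third. Combining the three parts yields maximality, so every element of $\mathcal{C}_{\mathcal{R}}$ is a maximal $(\Delta,\gamma)$\mbox{-}Clique.

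The main obstacle I anticipate is precisely this temporal boundary step, because the algorithm advances the interval in $\Delta$\mbox{-}sized jumps ($t_{a'}=t_{al}-\Delta$) whereas Definition \ref{Def:MDG} tests extendability by a single step $dt$. I must verify that the fixed point of the jump coincides with single\mbox{-}step non\mbox{-}extendability, i.e.\ that $t_{auv}$ and $t_{buv}$ --- computed with the one\mbox{-}step look\mbox{-}back and look\mbox{-}forward adjustment flagged in the Remark for the $\Delta=\gamma+1$ type corner cases --- really record the leftmost and rightmost feasible endpoints for each edge. Establishing that this per\mbox{-}edge bound, once maximised (respectively minimised) over all pairs $u,v\in\mathcal{X}$, tightly characterises left\mbox{-} and right\mbox{-}maximality is the delicate part; the cardinality argument and the reduction to $N_G(\mathcal{X})$ are routine by comparison.
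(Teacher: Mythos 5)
Your proposal is correct and follows essentially the same route as the paper's proof: the paper argues by contradiction (a non-maximal clique in $\mathcal{C}_{\mathcal{R}}$ would have triggered one of the three \textit{Is\_Maximal} $=$ FALSE branches, using exactly the bound $t_{al}\leq t_{a'}+\Delta<t_a+\Delta$ that you use in contrapositive form), while you phrase it directly as ``flag survives TRUE $\Rightarrow$ each bullet of Definition~\ref{Def:MDG} fails.'' Your added care about the sufficiency of scanning only $N_G(\mathcal{X})$ and about the $dt$-step versus $\Delta$-jump mismatch is a welcome tightening of points the paper handles only implicitly (via a ``without loss of generality'' and the Remark), but it does not change the underlying argument.
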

\begin{proof}
We prove this statement by contradiction. Let us assume, $C_{i}= (\mathcal{X},[t_a,t_b])$ be an element of $\mathcal{C}_{\mathcal{R}}$, which is not maximal. Now, as $\mathcal{C}_{\mathcal{R}}$ can only be filled by the elements of $\mathcal{C}$, so, $C_{i}$ is already a $(\Delta,\gamma)$\mbox{-}Clique (By Lemma \ref{Lemma:Cdelga}). If, $C_{i}$ is not maximal, any one of the following three can happen: 
\begin{enumerate}
\item $\exists v \in N_G(\mathcal{X}) \setminus \mathcal{X}$, such that $(\mathcal{X} \cup \{v\}, [t_a, t_b])$ is a $(\Delta, \gamma)$\mbox{-}Clique. So, $Is \_Maximal$ flag at Line number 8 will be set to false and, $C_{i}$ can not be added to $\mathcal{C}_{\mathcal{R}}$ at Line number 31. Hence, the assumption $C_{i} \in \mathcal{C}_{\mathcal{R}}$ is a contradiction.
\item Let us assume, $\exists t_{a^{'}} < t_a$, such that, $(\mathcal{X},[t_{a^{'}},t_{b}])$ is a $(\Delta,\gamma)$\mbox{-}Clique. Without loss of generality, we assume that there is no edge between the nodes of $\mathcal{X}$ from $t_{a^{'}}$ to $t_{a}$. Let us also consider the latest first $\gamma$\mbox{-}th occurrence time of an edge in $C_i$ is $t_{al} \in [t_a,t_b]$ calculated in Line number 14. So, it is necessary that $t_{al} \leq t_{a^{'}} + \Delta$, as $(\mathcal{X},[t_{a^{'}},t_{b}])$ is a $(\Delta,\gamma)$\mbox{-}Clique. Since, $t_{a^{'}}< t_a$ implies $t_{al}<t_a+\Delta$. As a consequence, if condition at Line number 16 is satisfied and the flag $Is \_ Maximal$ is set to false in Line number 17. Hence, we reach to a contradiction as above.
\item If $\exists t_{b^{'}}>t_b$ such that $(\mathcal{X},[t_{a},t_{b^{'}}])$ is a $(\Delta,\gamma)$\mbox{-}Clique, then in similar way, we reach contradiction as Case 2. 
\end{enumerate}
Finally, $\mathcal{C}_{\mathcal{R}}$ contains only the maximal $(\Delta,\gamma)$\mbox{-}Cliques, which proves the statement. 
\end{proof}
\begin{mylem} \label{Lemma:5}
For a maximal $(\Delta,\gamma)$\mbox{-}Clique $C_i=(\mathcal{X},[t_a,t_b])$,\\ 
(a) if $e$ is the earliest occurrence time of an edge in $C_{i}$, then  $t_b \geq e+\Delta$ \\
(b) if $l$ is the last occurrence time of an edge in $C_{i}$, then $t_a \leq l-\Delta$. 
\end{mylem}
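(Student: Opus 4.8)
The plan is to prove both parts by contradiction, using nothing beyond the fact that a maximal $(\Delta,\gamma)$\mbox{-}Clique cannot be enlarged in time, i.e.\ the second and third bullets of Definition \ref{Def:maximal}. Throughout I would keep two facts in mind. First, by the very definition of $e$ and $l$, \emph{every} occurrence of \emph{every} static edge $(uv)$ with $u,v\in\mathcal{X}$ lies in $[e,l]\subseteq[t_a,t_b]$. Second, since $C_i$ is a $(\Delta,\gamma)$\mbox{-}Clique, the window at $\tau=t_a$ in Definition \ref{Def:DG} already forces each such pair to have at least $\gamma$ occurrences inside $[t_a,t_b]$, and all of these lie in $[e,l]$. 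I would also use that the cliques in question have $t_b-t_a\ge\Delta$ (established for the initial cliques by Lemma \ref{Lemma:ini} and preserved by the $\Delta$\mbox{-}steps of Algorithm \ref{Algo:Enum}), which lets me speak of ``the first/last sliding window''. The two parts are mirror images, so I would write (a) in full and obtain (b) by reflecting time.

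For part (a) I would assume, for contradiction, that $t_b<e+\Delta$ and then show that $(\mathcal{X},[t_a,t_b+dt])$ is again a $(\Delta,\gamma)$\mbox{-}Clique, contradicting the third maximality condition. The first step is a reduction: enlarging the right endpoint from $t_b$ to $t_b+dt$ leaves the window $[\tau,\min(\tau+\Delta,t_b+dt)]$ unchanged for every old $\tau\le t_b-\Delta$ (there $\tau+\Delta\le t_b<t_b+dt$), and introduces exactly one new value $\tau=t_b+dt-\Delta$ with window $[t_b+dt-\Delta,\,t_b+dt]$; hence checking the clique condition reduces to this single new window. The second step is arithmetic: $t_b<e+\Delta$ gives $t_b+dt-\Delta\le e$ (time is discrete with step $dt$), while $t_b+dt\ge l$ because $l\le t_b$, so $[e,l]\subseteq[t_b+dt-\Delta,\,t_b+dt]$. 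As every pair has all of its at least $\gamma$ occurrences inside $[e,l]$, the new window contains at least $\gamma$ of them for each pair, so $(\mathcal{X},[t_a,t_b+dt])$ satisfies Definition \ref{Def:DG}. This contradicts maximality, and therefore $t_b\ge e+\Delta$.

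Part (b) I would handle by the reflected argument: assume $t_a>l-\Delta$, i.e.\ $l<t_a+\Delta$, and consider $(\mathcal{X},[t_a-dt,t_b])$. Shrinking the left endpoint introduces the single new window $[t_a-dt,\,t_a-dt+\Delta]$ (for it, $t_a-dt+\Delta\le t_b$). The assumption gives $t_a-dt+\Delta\ge l$, while $t_a-dt\le e$ since $e\ge t_a$, so once more $[e,l]$ lies inside the new window and each pair keeps at least $\gamma$ occurrences there. Thus $(\mathcal{X},[t_a-dt,t_b])$ is a $(\Delta,\gamma)$\mbox{-}Clique, contradicting the second maximality condition, and $t_a\le l-\Delta$ follows.

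The step I expect to need the most care is the reduction claim that a unit shift of one endpoint creates exactly one new sliding window, so that validity of the enlarged pair may be tested on that window alone; this rests on the monotone, telescoping structure of the family $\{[\tau,\min(\tau+\Delta,\cdot)]\}_{\tau}$ in Definition \ref{Def:DG}, and is where I would spend the most rigor. The second delicate point is the boundary: the second and third bullets of Definition \ref{Def:maximal} are vacuous once $t_a-dt<t$ or $t_b+dt>t^{'}$, so the contradiction is only available when the relevant shift stays inside the observation window $[t,t^{'}]$. I would therefore state and prove the two parts under the same non\mbox{-}boundary proviso that already guards the maximality conditions themselves.
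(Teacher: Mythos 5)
Your proposal is correct and follows essentially the same route as the paper's proof: both argue by contradiction that if $t_b < e+\Delta$ (resp.\ $t_a > l-\Delta$) then all $\gamma$ required occurrences of every pair already lie inside a single window of length $\Delta$, so the time interval can be extended to the right (resp.\ left), violating maximality. The only differences are cosmetic --- you extend by one step $dt$ and verify the single new sliding window, matching the literal form of Definition \ref{Def:maximal}, whereas the paper jumps directly to $(\mathcal{X},[t_a,e+\Delta])$; and your caveat about the boundary of the observation window $[t,t^{'}]$ is a legitimate proviso that the paper's proof silently ignores.
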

\begin{proof}
We prove this statement by contradiction. Since, $C_{i}$ is a $(\Delta,\gamma)$\mbox{-}Clique, $\forall u,v \in \mathcal{X}$, there exist atleast $\gamma$ number of edges $(u,v,t)$, such that $e\leq t \leq t_b$. Let us assume, $t_b < e+\Delta$. Then, $\forall u,v \in \mathcal{X}$, there also exist $\gamma$ number of edges $(u,v,t)$, such that $e\leq t \leq t_b <e+\Delta$. Hence, $(\mathcal{X},[t_a,e+\Delta])$ is a $(\Delta,\gamma)$\mbox{-}Clique. This implies that $C_i$ is not maximal as $[t_a,t_b] \subset [t_a,e+\Delta]$. Thus, we reach the contradiction. This proves the Statement (a).

In similar way, the Statement (b) can also be proved by assuming $t_a >l-\Delta$.
\end{proof}
\begin{mylem} \label{Lemma:iniall}
In Algorithm \ref{Algo:Ini}, the set $\mathcal{C}$ contains all the required $(\Delta, \gamma)$\mbox{-}Cliques with the properties mentioned in Lemma \ref{Lemma:ini} for Algorithm \ref{Algo:Enum}.
\end{mylem}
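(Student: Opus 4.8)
The plan is to establish the \emph{completeness} of the initialization. Since Lemma~\ref{Lemma:Cdelga} and Lemma~\ref{Lemma:TH1} already guarantee soundness (every member of $\mathcal{C}_{\mathcal{R}}$ is a maximal $(\Delta,\gamma)$\mbox{-}Clique), what is left is the converse: every maximal $(\Delta,\gamma)$\mbox{-}Clique must be producible by Algorithm~\ref{Algo:Enum}, and this hinges on $\mathcal{C}$ already holding a suitable starting seed for each of them. So I would fix an arbitrary maximal $(\Delta,\gamma)$\mbox{-}Clique $(\mathcal{X},[t_a,t_b])$, select a pair $u,v\in\mathcal{X}$ realizing the earliest occurrence $e$ of an edge in the clique, and exhibit a seed $(\{u,v\},[s_a,s_b])\in\mathcal{C}$ — necessarily of the Lemma~\ref{Lemma:ini} form (cardinality $2$, frequency exactly $\gamma$, span exactly $\Delta$) — that lies inside $[t_a,t_b]$ and from which the expansion moves of Algorithm~\ref{Algo:Enum} reach the full clique.

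First I would examine the occurrences of $(u,v)$ in $[t_a,t_b]$. Applying the $(\Delta,\gamma)$\mbox{-}Clique property at $\tau=t_a$, the window $[t_a,t_a+\Delta]$ already holds at least $\gamma$ occurrences, so the first $\gamma$ consecutive occurrences $\mathcal{T}_{(uv)}[i],\dots,\mathcal{T}_{(uv)}[i+\gamma-1]$ at or after $t_a$ span at most $\Delta$, with $\mathcal{T}_{(uv)}[i]=e\geq t_a$. This is exactly a window that lines~6--15 of Algorithm~\ref{Algo:Ini} process, emitting either a single $\Delta$\mbox{-}seed (span $=\Delta$) or the forward/backward pair $C_1,C_2$ (span $<\Delta$). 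The crucial containment comes from Lemma~\ref{Lemma:5}: part~(a) gives $t_b\geq e+\Delta$, so the forward seed $C_1=(\{u,v\},[e,e+\Delta])$ satisfies $e\geq t_a$ and $e+\Delta\leq t_b$, i.e. $C_1\subseteq[t_a,t_b]$, and it is a valid $(\Delta,\gamma)$\mbox{-}Clique since its single $\Delta$\mbox{-}window contains the $\gamma$ occurrences. A symmetric argument at the right boundary uses the last window together with Lemma~\ref{Lemma:5}(b) and the backward seed. The case analysis depicted in Figure~\ref{Fig:LD} then records the remaining configurations (span equal to versus strictly less than $\Delta$, and left\mbox{-} versus right\mbox{-}anchored seeds), each yielding a generated seed sitting inside $[t_a,t_b]$.

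Having placed a seed inside $[t_a,t_b]$, I would finish the reachability argument. Because $(\mathcal{X},[t_a,t_b])$ is a $(\Delta,\gamma)$\mbox{-}Clique, every $\Delta$\mbox{-}window inside $[t_a,t_b]$ carries $\geq\gamma$ occurrences for each pair, hence $\mathcal{X}$ is a clique on the seed's sub\mbox{-}interval too; therefore the vertex\mbox{-}addition step (lines~6--13, which keeps the interval fixed) grows $\{u,v\}$ up to $\mathcal{X}$, after which the left\mbox{-} and right\mbox{-}expansion steps enlarge the interval to $[t_a,t_b]$. The hard part, which I expect to be the main obstacle, is showing that the \emph{discrete} expansion jumps — replacing $t_a$ by $t_{al}-\Delta$ and $t_b$ by $t_{br}+\Delta$, recomputed from the current vertex set — land \emph{exactly} on $t_a$ and $t_b$ rather than overshooting or stalling. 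I would argue that each move remains a valid clique by Lemma~\ref{Lemma:Cdelga}, so it cannot pass the maximal extent $[t_a,t_b]$ of the set $\mathcal{X}$ (otherwise the time\mbox{-}interval conditions of Definition~\ref{Def:maximal} would be violated for the assumed maximal clique), and that the jumps are large enough to reach it. Particular care is required for the $\Delta=\gamma+1$ situation flagged in the Remarks, where the $\gamma^{th}$\mbox{-}from\mbox{-}last occurrence must be read one timestamp further so that $t_{br}$ (and symmetrically $t_{al}$) is computed correctly; I would verify that the seed endpoints produced by Algorithm~\ref{Algo:Ini} are consistent with this shift. Concluding, a seed of the Lemma~\ref{Lemma:ini} form lies in $\mathcal{C}$ for every maximal clique and expands to it, so $\mathcal{C}$ contains all the required initial $(\Delta,\gamma)$\mbox{-}Cliques.
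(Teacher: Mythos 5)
Your proposal is sound in outline, but it proves a different (and more global) statement than the one the paper attaches to this lemma. You establish completeness end-to-end: pick a maximal clique, exhibit a generated seed $(\{u,v\},[e,e+\Delta])$ inside $[t_a,t_b]$ via Lemma~\ref{Lemma:5}(a), grow the vertex set, then expand the interval. In the paper that chain is split across three results: the seed-plus-vertex-growth argument is the opening of Lemma~\ref{Lemma:PreTh}, and the interval-expansion reachability (the part you call ``the hard part'' and only sketch with ``the jumps are large enough to reach it'') is exactly Lemma~\ref{Lemma:7}, including the $\Delta=\gamma+1$ boundary care from the Remarks. The paper's own proof of Lemma~\ref{Lemma:iniall} is far more local: it fixes a candidate two-vertex $\Delta$-window $[t_a,t_a+\Delta]$ containing $\gamma$ consecutive occurrences with first occurrence $e$ and last occurrence $l$, and does a four-way case analysis on whether $e=t_a$ and whether $l=t_a+\Delta$. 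Cases (i)--(iii) are emitted at lines~8, 12 and 11 of Algorithm~\ref{Algo:Ini} respectively; case (iv), where $t_a<e<l<t_a+\Delta$, is \emph{not} emitted, and the substance of the paper's proof is showing this omission is harmless: the two seeds $(\{u,v\},[e,e+\Delta])$ and $(\{u,v\},[l-\Delta,l])$ that \emph{are} emitted for those same occurrences both lead Algorithm~\ref{Algo:Enum} to the intermediate clique $(\{u,v\},[l-\Delta,e+\Delta])$ that the omitted seed would have produced. Your route buys a cleaner story --- by anchoring the seed at the earliest occurrence $e$ you never have to discuss redundant windows at all --- but it only closes if you actually supply the expansion argument rather than asserting it; as written, that step is a gap in a standalone reading, though it is filled by Lemma~\ref{Lemma:7} in the paper's ordering. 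The paper's route buys modularity: Lemma~\ref{Lemma:iniall} stays a statement purely about the initialization, at the cost of the somewhat informal ``redundancy'' argument for case (iv).
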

\begin{proof}
 Let us assume, that $C_i=(\{u,v\},[t_a,t_a+\Delta])$ be an initial $(\Delta, \gamma)$\mbox{-}Clique. This implies that the static edge $(u,v)$ occurs exactly $\gamma$ times within the time duration $t_a$ to $t_a+\Delta$ (By Lemma \ref{Lemma:ini}). Assume that for these consecutive $\gamma$ occurrences, $(uv)$ appears first at time $e$ and last at time $l$, where $t_a \leq e < l \leq t_a+\Delta$. Now, any one of the following four cases can happen. (i) If $e=t_a$ and $l=t_a+\Delta$, then this clique will be added in Line number 8 of Algorithm \ref{Algo:Ini}. (ii) If $e>t_a$ and $l=t_a+\Delta$, then this clique will be added in Line number 12 of Algorithm \ref{Algo:Ini}. (iii) If $e=t_a$ and $l<t_a+\Delta$, then this clique will be added in Line number 11 of Algorithm \ref{Algo:Ini}. (iv) If $t_a < e < l < t_a+\Delta$, no cliques will be added by Algorithm \ref{Algo:Ini}. Now, we want to show, the cliques of case (iv) are actually redundant. If such a clique, $(\{u,v\},[t_a,t_b])$, where $t_b = t_a + \Delta$ and $t_a < e < l < t_b$, is added in $\mathcal{C}$, then an intermediate clique $(\{u,v\},[l-\Delta,e+\Delta])$ will be reached  by Algorithm \ref{Algo:Enum} at some stage by Line number 14 to 29. Now, if there is consecutive $\gamma$ occurrences of $(u,v)$ within $\Delta$ duration, Algorithm \ref{Algo:Ini} adds two cliques $(\{u,v\},[l-\Delta,l])$ and $(\{u,v\},[e,e+\Delta])$ in $\mathcal{C}$ (by case (ii) and (iii) respectively). These two cliques also form an intermediate clique $(\{u,v\},[l-\Delta,e+\Delta])$ in Algorithm \ref{Algo:Enum}. This implies that case (iv) cliques are redundant and will not effect the computation of all Maximal cliques. Figure \ref{Fig:LD} explains the same. Hence, the statement is proved.
\end{proof}
%

%
%
%

\begin{mylem} \label{Lemma:7}
Let, $C_i=(\mathcal{X}, [t_a,t_b])$ be a maximal $(\Delta,\gamma)$\mbox{-}Clique and $e$ be the earliest occurrence time of an edge in $C_i$. If, $(\mathcal{X}, [e,e+\Delta])$ in $\mathcal{C}$ at some stage of Algorithm \ref{Algo:Enum}, then $C_i \in \mathcal{C}_{\mathcal{R}}$.
\end{mylem}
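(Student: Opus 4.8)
The plan is to follow the cliques on the fixed vertex set $\mathcal{X}$ that the algorithm spawns once it processes $(\mathcal{X},[e,e+\Delta])$, and to show that a single left expansion pins the left endpoint at $t_a$, after which iterated right expansions push the right endpoint monotonically up to $t_b$ and thereby generate $C_i$ itself; once $C_i$ is taken out of $\mathcal{C}$, its maximality forces all three expansion tests to fail, so it is placed in $\mathcal{C}_{\mathcal{R}}$.

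First I would analyse the left expansion of $(\mathcal{X},[e,e+\Delta])$. Because $e$ is the earliest occurrence time of any edge of $\mathcal{X}$ in $C_i$, no edge between vertices of $\mathcal{X}$ occurs in $[t_a,e)$; hence for every pair $u,v$ the first $\gamma$ occurrences inside $[e,e+\Delta]$ coincide with the first $\gamma$ occurrences inside $[t_a,t_b]$, so the quantity $t_{al}$ computed in Line 14 equals the corresponding quantity for $C_i$. The least feasible left endpoint is governed by the first window alone (it must contain the $\gamma^{th}$ occurrence, at time $t_{al}$, of the worst pair) and therefore equals $t_{al}-\Delta$ independently of the right endpoint; Lemma \ref{Lemma:Cdelga} already shows $(\mathcal{X},[t_{al}-\Delta,\,\cdot\,])$ is valid, and were $t_{al}-\Delta<t_a$ then shrinking the left endpoint would make $(\mathcal{X},[t_a-dt,t_b])$ valid too, contradicting left-maximality of $C_i$. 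Thus $t_{a'}=t_{al}-\Delta=t_a$ and one left expansion yields $(\mathcal{X},[t_a,e+\Delta])$. This value is also stable: for any $s$ the first-$\gamma^{th}$-occurrence times of $(\mathcal{X},[t_a,s])$ are unchanged, so every later left expansion recomputes $t_{a'}=t_a$, altering neither the clique nor the $Is\_Maximal$ flag.

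Next I would treat the right expansion as a monotone walk on the right endpoint with the left endpoint frozen at $t_a$. Two facts drive it. \textbf{No overshoot}: every clique the algorithm emits is a genuine $(\Delta,\gamma)$-Clique by Lemma \ref{Lemma:Cdelga}, and valid intervals with the same left endpoint are closed under taking a smaller right endpoint (down to $t_a+\Delta$); hence a produced interval $[t_a,t_{b'}]$ with $t_{b'}>t_b$ would force $(\mathcal{X},[t_a,t_b+dt])$ to be valid, contradicting right-maximality of $C_i$, so always $t_{b'}\le t_b$. \textbf{Progress}: for every $s$ with $t_a+\Delta\le s<t_b$ the interval $[t_a,s+dt]$ is valid by the same closure property, so $(\mathcal{X},[t_a,s])$ is not right-maximal and the right-expansion step returns $t_{b'}>s$. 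Consequently the right endpoints generated form a strictly increasing sequence inside the finite set $\mathbb{T}$, bounded by $t_b$; it can halt only at a right-maximal interval, and by the progress fact the sole right-maximal interval with left endpoint $t_a$ is $[t_a,t_b]$. Hence $(\mathcal{X},[t_a,t_b])=C_i$ is eventually inserted into $\mathcal{C}$ (and $\mathcal{C}_{\mathcal{I}}$); since the while loop empties $\mathcal{C}$ while $\mathcal{C}_{\mathcal{I}}$ only suppresses duplicates, $C_i$ is processed.

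Finally, when $C_i$ is processed, cardinal maximality makes the test in Line 7 fail for every $v\in N_G(\mathcal{X})\setminus\mathcal{X}$, the left expansion returns $t_{a'}=t_a$, and the right expansion returns $t_{b'}=t_b$; all three leave $Is\_Maximal$ true, so $C_i$ is added to $\mathcal{C}_{\mathcal{R}}$ in Line 31. The delicate part, and the one I expect to be the main obstacle, is the progress claim for the right expansion: one must verify that the list $t_{buv}$, its minimum $t_{br}$, and in particular the one-step look-ahead of the Remark that handles the $\Delta=\gamma+1$ boundary case, never report $t_{b'}=s$ while a genuine rightward extension exists. Establishing this ``no premature stopping'' property is exactly what guarantees the walk climbs all the way to $t_b$ instead of stalling early.
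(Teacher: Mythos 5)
Your overall strategy matches the paper's: start from $(\mathcal{X},[e,e+\Delta])$, use the temporal-expansion blocks of Algorithm \ref{Algo:Enum} to walk the interval out to $[t_a,t_b]$, with maximality of $C_i$ ruling out overshoot and Lemma \ref{Lemma:Cdelga} guaranteeing validity of every intermediate clique. The only structural difference is the order: the paper first iterates the right expansion to reach $(\mathcal{X},[e,t_b])$ and then applies a single left expansion to land on $t_a$, whereas you do the left expansion first and then walk rightwards. That reordering is immaterial, and your observation that $t_{al}$ computed on $[e,e+\Delta]$ agrees with $t_{al}$ for $C_i$ (because no edge of $\mathcal{X}$ occurs in $[t_a,e)$ and all first $\gamma$-th occurrences fall inside $[e,e+\Delta]$) is correct.

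There is, however, a genuine gap, and you have named it yourself: the ``progress'' claim --- that the right expansion applied to $(\mathcal{X},[t_a,s])$ with $s<t_b$ always returns $t_{b^{'}}>s$ --- is asserted but never established, and it is the load-bearing step; without it the walk can stall strictly below $t_b$ and $C_i$ is never generated. The worry is concrete: if for some pair $(u,v)$ the $\gamma$-th-from-last occurrence in $[t_a,s]$ sits exactly at $s-\Delta$, then $t_{br}+\Delta=s$, the test $t_{b^{'}}\neq t_b$ fails, and no new clique is emitted, even though $(\mathcal{X},[t_a,s+dt])$ may be a valid clique whose missing occurrence is supplied by an edge at time $s+dt$. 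The paper closes this not by proving per-step progress but by a contradiction at the termination point of the expansion chain: if the chain of right expansions halts at $l_q=t_{br}+\Delta$, then by the one-step look-ahead described in the Remark the pair realizing the minimum $t_{br}$ has fewer than $\gamma$ occurrences in $[t_{br}+dt,\ t_{br}+dt+\Delta]$; since $C_i$ is itself a $(\Delta,\gamma)$-clique, every window $[\tau,\tau+\Delta]$ with $\tau\le t_b-\Delta$ must contain $\gamma$ occurrences of that pair, which forces $t_{br}+\Delta\ge t_b$, and the no-overshoot direction gives equality. To complete your version you would need the corresponding case analysis (either the $\gamma$-th-from-last occurrence lies strictly inside $(s-\Delta,s]$, giving progress outright, or it equals $s-\Delta$ and the look-ahead fires, contributing $t_{buv}=s-\Delta+dt$); the content you deferred is precisely the content of the paper's proof.
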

\begin{proof}
Assume, $C_0=(\mathcal{X}, [e,e+\Delta])$ is in $\mathcal{C}$. We consider the sequence of steps of Algorithm \ref{Algo:Enum} of the form: $C_0 \rightarrow C_1 \rightarrow C_2 \rightarrow \dots \rightarrow C_q$ such that $\forall p, \ C_p=(\mathcal{X}, [e, l_p])$ with $l_{p+1} > l_p$, i.e., the Algorithm \ref{Algo:Enum} builds $C_{p+1}$ from $C_p$ in Line number 22 to 29. Notice that, $t_b \geq e+\Delta$ from Lemma \ref{Lemma:5} and \ref{Lemma:iniall}.
\par We show that $C_q = (\mathcal{X}, [e,t_b])$. If the statement is not true, then, in $C_q = (\mathcal{X}, [e,l_q])$ from sequence, $l_q \neq t_b$. Now, as $C_i$ is maximal, we must have $l_q < t_b$. In addition, $l_q = t_{br} +\Delta$ where $t_{br}$ is the earliest last $\gamma^{th}$ occurrence time of an edge in $C_{q-1}$ computed at Line number 22. Since $C_q$ is the last $(\Delta,\gamma)$\mbox{-}Clique of the sequence, $t_{br}$ is also the earliest last $\gamma^{th}$ occurrence time of an edge in $C_q$; else there will be a clique $C_{q+1}$ satisfying the constraints of the sequence above. Hence, $\exists u,v \in \mathcal{X}$ and $(t_{br}, u, v) \in E(\mathcal{G})$ for which there is no $\gamma$ number of occurrences for $(u,v)$ from $t_{br}+dt$ to $t_{br}+dt+\Delta$. This ensures, $t_{br} + \Delta = t_b$, which implies $l_q = t_b$.
\par Now, we want to show that $C_i$ is constructed from $C_q$, which means $e$ will be expanded towards $t_a$ in some future steps of Algorithm \ref{Algo:Enum}. As, $C_0$ is a $(\Delta,\gamma)$\mbox{-}Clique, there is at least $\gamma$ occurrences $\forall u,v \in \mathcal{X}$. So, the same holds for $C_q$ as well. Now, let $t_{al}$ is the latest first $\gamma^{th}$ occurrence time in $[e,t_b]$, then Algorithm \ref{Algo:Enum} will build $C_i$ with $t_{al}-\Delta = t_a$ from Line number 14 to 22.
\end{proof}
\begin{mylem} \label{Lemma:PreTh}
$\mathcal{C}_{\mathcal{R}}$ contains all the maximal $(\Delta,\gamma)$\mbox{-}Cliques of the temporal network $\mathcal{G}$. 
\end{mylem}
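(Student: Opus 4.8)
The statement is the completeness counterpart of Lemma \ref{Lemma:TH1}: together they will show that $\mathcal{C}_{\mathcal{R}}$ is exactly the set of maximal $(\Delta,\gamma)$\mbox{-}Cliques, and here I only need the ``contains all'' direction. My plan is to fix an arbitrary maximal $(\Delta,\gamma)$\mbox{-}Clique $C_i=(\mathcal{X},[t_a,t_b])$ and show the algorithm eventually places it in $\mathcal{C}_{\mathcal{R}}$. Let $e$ be the earliest occurrence time of an edge in $C_i$. By Lemma \ref{Lemma:5}(a) we have $t_b\geq e+\Delta$, so $[e,e+\Delta]\subseteq[t_a,t_b]$ and, evaluating Definition \ref{Def:DG} at $\tau=e$, the pair $(\mathcal{X},[e,e+\Delta])$ is itself a $(\Delta,\gamma)$\mbox{-}Clique whose earliest edge occurrence is still $e$. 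Lemma \ref{Lemma:7} then reduces the whole problem to one reachability fact: it suffices to show that the seed $(\mathcal{X},[e,e+\Delta])$ enters $\mathcal{C}$ at some stage of Algorithm \ref{Algo:Enum}.

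To obtain that reachability fact I would prove a slightly stronger statement by induction on $|\mathcal{Y}|$, because subsets of a maximal clique are generally not maximal, so the induction cannot be run on maximal cliques directly. The claim is: \emph{every} $(\Delta,\gamma)$\mbox{-}Clique of the form $(\mathcal{Y},[s,s+\Delta])$ for which $s$ is the earliest occurrence time of an edge among the vertices of $\mathcal{Y}$ \emph{is added to} $\mathcal{C}$ \emph{at some stage}. For the base case $|\mathcal{Y}|=2$, writing $\mathcal{Y}=\{u,v\}$, the first $\gamma$ timestamps of $(uv)$ start at $s$ and, because $(\mathcal{Y},[s,s+\Delta])$ is a clique, span at most $\Delta$; hence at iteration $i=1$ Algorithm \ref{Algo:Ini} outputs exactly $(\{u,v\},[s,s+\Delta])$, directly at line 8 if the span equals $\Delta$ and otherwise as the clique $C_1$ at line 11. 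By Lemma \ref{Lemma:iniall} this seed is in the initial $\mathcal{C}$.

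For the inductive step with $|\mathcal{Y}|=k\geq 3$, let $(u^\ast,v^\ast)$ be an edge realising the earliest occurrence time $s$, and choose any $w\in\mathcal{Y}\setminus\{u^\ast,v^\ast\}$, which exists since $k\geq 3$. Put $\mathcal{Y}'=\mathcal{Y}\setminus\{w\}$. Deleting a vertex only removes edge occurrences, so $(\mathcal{Y}',[s,s+\Delta])$ is again a $(\Delta,\gamma)$\mbox{-}Clique and, since $(u^\ast,v^\ast)$ survives and still occurs at $s$, the value $s$ remains the earliest occurrence time for $\mathcal{Y}'$. The induction hypothesis applied to $\mathcal{Y}'$ (of size $k-1$) puts $(\mathcal{Y}',[s,s+\Delta])$ into $\mathcal{C}$; when this clique is later taken from $\mathcal{C}$, the vertex $w$ lies in $N_G(\mathcal{Y}')$ over $[s,s+\Delta]$ and the test at line 7 succeeds because $(\mathcal{Y},[s,s+\Delta])$ is a clique, so $(\mathcal{Y},[s,s+\Delta])$ is inserted into $\mathcal{C}$ (the guard against $\mathcal{C}_{\mathcal{I}}$ only makes this conclusion easier). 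Specialising the proven claim to $\mathcal{Y}=\mathcal{X}$ and $s=e$ and then invoking Lemma \ref{Lemma:7} yields $C_i\in\mathcal{C}_{\mathcal{R}}$, as required.

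The main obstacle I anticipate is setting up the induction correctly: the naive induction on maximal cliques fails, so the real work is to isolate the right invariant, namely length-exactly-$\Delta$ cliques \emph{anchored} at their earliest edge occurrence, and to verify that this anchor is preserved under vertex deletion, which forces the careful choice $w\notin\{u^\ast,v^\ast\}$. The remaining points, that $(\mathcal{X},[e,e+\Delta])$ is genuinely a clique, that the first $\gamma$ occurrences of a seed edge span at most $\Delta$, and that line 7 admits $w$, are routine once Lemma \ref{Lemma:5} and Definition \ref{Def:DG} are in hand.
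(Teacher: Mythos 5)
Your proposal follows essentially the same route as the paper's proof: seed with the two-vertex clique $(\{u,v\},[e,e+\Delta])$ anchored at the earliest edge occurrence (guaranteed by Algorithm \ref{Algo:Ini}), grow it by vertex additions at lines 6--13 to reach $(\mathcal{X},[e,e+\Delta])$, and then invoke Lemma \ref{Lemma:7} to extend the interval to $[t_a,t_b]$. The only difference is that you formalize the paper's one-line claim that the algorithm ``iteratively adds all the elements of $\mathcal{X}\setminus\{u,v\}$'' as an explicit induction on $|\mathcal{Y}|$ with the anchor-preservation observation, which is a more rigorous rendering of the same step rather than a different argument.
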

\begin{proof}
As, $\mathcal{C}_{\mathcal{R}}$ is constructed from the clique set $\mathcal{C}$, we need to show that all the maximal $(\Delta,\gamma)$\mbox{-}Cliques are in $\mathcal{C}$ at some iteration of the \textit{while} loop in Algorithm \ref{Algo:Enum}. By lemma \ref{Lemma:iniall}, $\mathcal{C}$ initially contains all the required $(\Delta,\gamma)$\mbox{-}Cliques. Let, $C_i=(\mathcal{X}, [t_a,t_b])$ be a maximal $(\Delta,\gamma)$\mbox{-}Clique, $e$ be the earliest occurrence time of an edge in $C_i$ and let $u,v \in \mathcal{X}$ be two nodes, such that there exists $\gamma$ edges in $E(\mathcal{G})$, starting from $e$. We show that there is a sequence of steps that builds $C_i$ from $(\Delta,\gamma)$\mbox{-}Clique $C_0=(\{u,v\},[e,e+\Delta])$ (which is in $\mathcal{C}$ from Algorithm \ref{Algo:Ini}). Notice that, Algorithm \ref{Algo:Enum} iteratively adds all the elements of $\mathcal{X} \setminus \{u,v\}$ in $C_0$ from Line number 6 to 13. This creates $(\Delta,\gamma)$\mbox{-}Clique $C^{'}=(\mathcal{X},[e,e+\Delta])$ from $C_0$. We finally apply Lemma \ref{Lemma:7} to conclude that the Algorithm \ref{Algo:Enum} builds $C_i$ from $C^{'}.$ 
\end{proof}
By Lemma \ref{Lemma:TH1} and \ref{Lemma:PreTh}, we get the correctness result of Algorithm \ref{Algo:Enum}.
\begin{mythem}
Given a temporal network $\mathcal{G}$ with duration $\Delta$ and edge occurrences $\gamma$, Algorithm \ref{Algo:Enum} correctly enumerates all the maximal $(\Delta,\gamma)$\mbox{-}Clique. 
\end{mythem}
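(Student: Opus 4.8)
The plan is to read the theorem as the conjunction of two set-inclusions and to discharge each by one of the already-established lemmas. Writing $\mathcal{M}$ for the collection of all maximal $(\Delta,\gamma)$\mbox{-}Cliques of $\mathcal{G}$, the assertion that Algorithm \ref{Algo:Enum} \emph{correctly enumerates} $\mathcal{M}$ unfolds into two parts: \textbf{soundness}, $\mathcal{C}_{\mathcal{R}} \subseteq \mathcal{M}$ (the output contains nothing spurious), and \textbf{completeness}, $\mathcal{M} \subseteq \mathcal{C}_{\mathcal{R}}$ (the output misses nothing). Establishing both inclusions yields $\mathcal{C}_{\mathcal{R}} = \mathcal{M}$, which is exactly the claim.

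First I would invoke Lemma \ref{Lemma:TH1}, which states precisely that every element placed in $\mathcal{C}_{\mathcal{R}}$ is a maximal $(\Delta,\gamma)$\mbox{-}Clique; this is the soundness inclusion verbatim. Next I would invoke Lemma \ref{Lemma:PreTh}, which asserts that $\mathcal{C}_{\mathcal{R}}$ contains every maximal $(\Delta,\gamma)$\mbox{-}Clique of $\mathcal{G}$; this is the completeness inclusion. Combining the two gives the double inclusion and hence set equality, so no member of $\mathcal{M}$ is omitted and no non-maximal clique is reported.

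The one ingredient not literally supplied by those two lemmas is \emph{termination}: an enumeration claim also requires that Algorithm \ref{Algo:Enum} halts and that each clique is listed only once. I would close this gap by noting that the ambient search space is finite, since there are finitely many vertex subsets and finitely many admissible endpoints drawn from the discrete time set $\mathbb{T}$, so only finitely many distinct $(\Delta,\gamma)$\mbox{-}Cliques exist at all. Because every clique added to $\mathcal{C}$ is simultaneously recorded in $\mathcal{C}_{\mathcal{I}}$, and each insertion is guarded by a $\notin \mathcal{C}_{\mathcal{I}}$ test, no clique is ever placed in $\mathcal{C}$ twice; hence the \textit{while} loop processes a finite set without repetition and terminates, and $\mathcal{C}_{\mathcal{R}}$, being a set, stores each maximal clique exactly once.

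The step I expect to be routine here is the theorem itself: the genuine difficulty has already been absorbed into the supporting lemmas, in particular the expansion-sequence construction of Lemma \ref{Lemma:7} and the redundancy analysis of Lemma \ref{Lemma:iniall} that together power the completeness direction. Consequently the main obstacle at this final stage is merely bookkeeping, namely pinning the informal phrase ``enumerates all'' down to the two inclusions plus termination, after which the proof reduces to an immediate citation of Lemmas \ref{Lemma:TH1} and \ref{Lemma:PreTh}.
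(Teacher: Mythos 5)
Your proposal matches the paper's own argument exactly: the paper derives the theorem as an immediate consequence of Lemma \ref{Lemma:TH1} (soundness) and Lemma \ref{Lemma:PreTh} (completeness), which is precisely your two-inclusion decomposition. Your added termination argument via the finiteness of the search space and the $\notin \mathcal{C}_{\mathcal{I}}$ guard is a reasonable supplement that the paper leaves implicit, but it does not change the route.
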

Now, we investigate the time and space requirement of our proposed methodology in worst case. Assume, that the number of nodes and edges of the temporal network is $n$ and $m$ respectively. The worst case will occur, when $\forall u,v \in \mathcal{X}$, for any $\Delta$ duration of the time horizon, there are $\gamma$ edges. In this case, the number of initial cliques returned by  Algorithm $\ref{Algo:Ini}$ will be of $\mathcal{O}(m-\gamma+1)$. Now, as in every $\Delta$ duration there is $\gamma$ edges, so all the initial cliques of $\mathcal{C}$ will be merged into a clique, where the duration will be the entire time horizon. The space consumed by the intermediate cliques of this process is $\mathcal{O}((m-\gamma+1)^{2})$. Now, the number of possible vertex subsets of the temporal graph is of $\mathcal{O}(2^{n})$. If there is a situation, when all $u,v \in \mathcal{X}$, the static edge $(u,v)$ has the frequency $\gamma$ within each duration $\Delta$, the number of intermediate cliques will be $\mathcal{O}(2^{n}(m-\gamma+1)^{2})$. Now, each clique can occupy a space of $\mathcal{O}(n)$. Hence, the space required by this process is $\mathcal{O}(2^{n}n(m-\gamma+1)^{2})$ in worst case.
\par We estimate the time complexity considering the number of basic operations performed by Algorithm \ref{Algo:Enum}. It consists of mainly three blocks; (i)from line number 6 to 13, (ii) 14 to 21, and (iii) 22 to 29. Now, the complexity of the block (iii) is same as (ii). So, we focus on first two blocks. 
\par For a vertex $v \notin \mathcal{X}$, line number 7 tests whether $\mathcal{X} \cup \{v\}$ is a $(\Delta,\gamma)$\mbox{-}Clique or not. To accomplish this task, for each node in $\mathcal{X}$ it has to search all the edges induced by the vertices in $\mathcal{X}$ in the worst case. Hence, required time is $\mathcal{O}(\vert \mathcal{X}\vert . m) \simeq \mathcal{O}(nm)$. At line number 9, it has to search $\mathcal{O}(2^{n}(m-\gamma+1)^{2})$ number of cliques in $\mathcal{C}_{\mathcal{I}}$ and compares two cliques with $\mathcal{O}(n)$ time. Therefore, total time for checking belongingness in $\mathcal{C}_{\mathcal{I}}$ is $\mathcal{O}(n.\log(2^{n}(m-\gamma+1)^{2})) = \mathcal{O}(n^2 +n\log(m-\gamma+1))$. This process is repeated for all $v \in N_G(\mathcal{X}) \setminus \mathcal{X}$ at line number 6 and required time is of $\mathcal{O}(n(nm + n^2 +n\log(m-\gamma+1))) = \mathcal{O}(n^2m+n^3 +n^2\log(m-\gamma+1)) = \mathcal{O}(n^2m+n^3)$.
\par Computing $t_{al}$ at line number 14 requires $\mathcal{O}(m)$ time. Line number 26 also takes $\mathcal{O}(n^2 +n\log(m-\gamma+1))$ time for checking belongingness in $\mathcal{C}_{\mathcal{I}}$. So, the time complexity of the block (ii) is $\mathcal{O}(m + n^2 +n\log(m-\gamma+1))$. 
\par At last, one iteration of the while loop costs $\mathcal{O}(n^2m+n^3 + m + n^2 +n\log(m-\gamma+1)) = \mathcal{O}(n^2m+n^3 )$. Now, the while loop runs for $\mathcal{O}(\vert \mathcal{C}_{\mathcal{I}} \vert) = \mathcal{O}(2^{n}(m-\gamma+1)^{2})$. Hence, the overall time complexity of the proposed methodology is of $\mathcal{O}(2^{n}(m-\gamma+1)^{2}(n^2m+n^3 )) = \mathcal{O}(2^{n}m^3n^2+2^{n}n^3m^2)$.

From this analysis we obtain the following result.
\begin{mythem}
Given a temporal network $\mathcal{G}(V,E,\mathcal{T})$ with $\vert V \vert$ = n, $\vert E \vert$ = m, for enumerating all the maximal $(\Delta,\gamma)$\mbox{-}Cliques the proposed methodology takes $\mathcal{O}(2^{n}n(m-\gamma+1)^{2})$ space and $\mathcal{O}(2^{n}m^3n^2+2^{n}n^3m^2)$ time. 
\end{mythem}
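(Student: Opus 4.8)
The plan is to prove the two bounds separately, establishing the space bound first since the time bound reuses the same counting argument. Throughout I would fix the \textbf{worst case} identified above: for every pair $u,v \in V$ and every window of length $\Delta$ there are exactly $\gamma$ occurrences of the static edge $(uv)$, so that no candidate extension is ever rejected and Algorithm \ref{Algo:Enum} explores the largest possible collection of cliques.

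For the \textbf{space bound}, the first step is to count the distinct $(\Delta,\gamma)$\mbox{-}Cliques that can ever be inserted into $\mathcal{C}_{\mathcal{I}}$, since $\mathcal{C} \subseteq \mathcal{C}_{\mathcal{I}}$ and $\mathcal{C}_{\mathcal{R}} \subseteq \mathcal{C}_{\mathcal{I}}$ hold throughout the execution. A clique is a pair $(\mathcal{X}, [t_a, t_b])$, and there are $\mathcal{O}(2^n)$ choices of vertex subset $\mathcal{X}$. I would then argue that for a fixed $\mathcal{X}$ the endpoints $t_a$ and $t_b$ of any clique created by the algorithm are each pinned to one of $\mathcal{O}(m-\gamma+1)$ admissible values: every left endpoint arises as $t_{al}-\Delta$ (Line 15) and every right endpoint as $t_{br}+\Delta$ (Line 23), both of which are offsets of edge\mbox{-}occurrence timestamps, and in the worst case there are only $\mathcal{O}(m-\gamma+1)$ such admissible timestamp positions. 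This yields $\mathcal{O}((m-\gamma+1)^2)$ intervals per subset, hence $\vert \mathcal{C}_{\mathcal{I}} \vert = \mathcal{O}(2^n(m-\gamma+1)^2)$. Since each stored clique records a vertex set of size $\mathcal{O}(n)$, multiplying gives the claimed $\mathcal{O}(2^n n (m-\gamma+1)^2)$ space.

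For the \textbf{time bound}, I would bound the cost of a single iteration of the \textit{while} loop and multiply by the number of iterations. Because the membership test against $\mathcal{C}_{\mathcal{I}}$ (Lines 9, 16, 24) guarantees that no clique is enqueued twice, each distinct clique is dequeued and processed exactly once, so the loop executes $\mathcal{O}(\vert \mathcal{C}_{\mathcal{I}} \vert) = \mathcal{O}(2^n(m-\gamma+1)^2)$ times. Within one iteration the dominant block is the vertex\mbox{-}addition loop (Lines 6--13): for each of the $\mathcal{O}(n)$ neighbours the $(\Delta,\gamma)$\mbox{-}clique test costs $\mathcal{O}(nm)$ and the membership check, maintaining $\mathcal{C}_{\mathcal{I}}$ as a balanced search structure, costs $\mathcal{O}(n \log \vert \mathcal{C}_{\mathcal{I}} \vert) = \mathcal{O}(n^2 + n\log(m-\gamma+1))$, giving $\mathcal{O}(n^2m + n^3)$ for the block; the two temporal\mbox{-}expansion blocks each cost only $\mathcal{O}(m + n^2 + n\log(m-\gamma+1))$ and are absorbed. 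Thus one iteration is $\mathcal{O}(n^2m + n^3)$, and multiplying by the iteration count gives $\mathcal{O}(2^n(m-\gamma+1)^2(n^2m+n^3)) = \mathcal{O}(2^n m^3 n^2 + 2^n n^3 m^2)$ after dropping lower\mbox{-}order terms.

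The hard part will be the counting step that pins the number of distinct intervals per vertex subset to $\mathcal{O}((m-\gamma+1)^2)$, equivalently the claim that the \textit{while} loop terminates after $\mathcal{O}(\vert \mathcal{C}_{\mathcal{I}} \vert)$ iterations. This requires two things argued carefully: first, that every endpoint produced by Lines 15 and 23 is drawn from the $\mathcal{O}(m-\gamma+1)$ edge\mbox{-}occurrence offsets rather than from the full time horizon of length $\mathbf{T}$, since otherwise the bound would degrade to involve $\mathbf{T}$; and second, that the $\mathcal{C}_{\mathcal{I}}$\mbox{-}membership guard genuinely prevents reprocessing, so that the number of loop iterations cannot exceed the number of distinct cliques. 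Everything else is routine per\mbox{-}line accounting, which the discussion preceding the theorem already carries out.
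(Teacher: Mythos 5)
Your proposal is correct and follows essentially the same route as the paper: the same worst-case instance, the same count of $\mathcal{O}(2^{n}(m-\gamma+1)^{2})$ intermediate cliques (vertex subsets times interval-endpoint choices drawn from the $\mathcal{O}(m-\gamma+1)$ edge-occurrence positions), the same $\mathcal{O}(n)$ per-clique storage, and the same per-iteration cost $\mathcal{O}(n^2m+n^3)$ dominated by the vertex-addition block, multiplied by $\mathcal{O}(\vert \mathcal{C}_{\mathcal{I}} \vert)$ iterations. The one point you flag as the hard part, pinning the interval endpoints to edge-occurrence offsets rather than the full time horizon, is treated no more rigorously in the paper than in your sketch, so there is nothing further to reconcile.
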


\section{Experimental Details}\label{ED}
In this section, we describe the experimental evaluation of our proposed methodology. We start with a brief description of the datasets that we use in our experiment.
\subsection{Dataset Description}

\begin{itemize}
	\item \textbf{Infectious} \cite{konect:2017:sociopatterns-infectious}, \cite{konect:sociopatterns}: This network describes the face-to-face behavior of people during the exhibition INFECTIOUS: STAY AWAY in 2009 at the Science Gallery in Dublin. Nodes represent exhibition visitors; edges represent face-to-face contacts that were active for at least 20 seconds. Multiple edges between two nodes are possible and denote multiple contacts. The network contains the data from the day with the most interactions.
	\item \textbf{Haggle} \cite{chaintreau2007impact}: This undirected network represents contacts between people measured by carried wireless devices. A node represents a person and an edge between two persons shows that there was a contact between them.
	\item \textbf{College Message} \cite{panzarasa2009patterns}:This dataset is comprised of private messages sent on an online social network at the University of California, Irvine. Users could search the network for others and then initiate conversation based on profile information. An edge $(u, v, t)$ means that user $u$ sent a private message to user $v$ at time $t$.
\end{itemize} 
\par We brief a preliminary statistics of these datasets in Table  \ref{Tab:1}.
\begin{table}
\begin{center}
\caption{Basic statistics of the datasets used in the experiment.}
    \begin{tabular}{ | p{2.0cm} | p{0.6cm} | p{1.8cm} | p{1.4cm} | p{1.1cm} |}
    \hline
    Dataset Name & Nodes & Temporal Links & Static Edges & Lifetime \\ \hline
    Infectious & 410 & 17298 & 2765 & 8 Hours  \\ \hline
    Haggle & 274 & 28244 & 2899 & 4 days \\ \hline
    College Message & 1899 & 59835 & 20296 & 193 Days \\
    \hline
    \end{tabular}
    \label{Tab:1}
\end{center}
\end{table}
\subsection{Goal of the Experiment}
In the context of human contacts, a ($\Delta$, $\gamma$)\mbox{-}Clique signifies a group of frequently interacted persons (set of vertices of the clique) for a particular time length (duration of the clique). From the experimentation, we want to study the following facts:
\begin{itemize}
\item Number of frequently contacted groups by varying their contact interval ($\Delta$) [Count of maximal cliques].
\item Maximum duration of contacts of the frequently contacted groups [Maximum duration among all the maximal cliques].
\item Maximum number of persons from the frequently contacted groups [Maximum cardinality among all the maximal cliques].
\item From the computational framework, number of iterations to obtain the frequently contacted groups. Instead of CPU time, we report number of iterations, which is platform independent.
\end{itemize}
\subsection{Experimental Setup}
Here, we describe our experimental set up. The main two parameters in this problem are $\Delta$ and $\gamma$. It is natural that one will be interested in finding the set of frequently contacted users for a time duration comparable with the network lifetime. Hence, we select the value of $\Delta$ for each dataset based on their lifetime. For the `Infectious' data set, we select the initial value of $\Delta$ as $1$ minute, increment by $1$ minute and continued till $10$ minutes. For the `Haggle' dataset, we adopt similar set up as Infectious. For the `College Message' data set, we initially set the $\Delta$ value as $1$ hour and then $12$, $24$, $72$ and $168$ hours. In case of Infectious and Haggle dataset, for each delta value, we start with $\gamma=2$ and increment by $1$, till the maximal clique set is null. As the chosen $\Delta$ values for the College Message dataset are larger, we increment the $\gamma$ by $5$ till it reaches to $20$ and by $10$ till the maximum clique set is null. The reason behind larger increment is for obtaining a significant change in the result (e.g., maximum cardinality, maximum duration etc.). We run our experiments on a server having Intel Xeon, 2.2 GHz, 16 core processor and 64 GB memory.
\subsection{Results and Observations}
Here, we present the obtained results and observations from the experiment. For Infectious dataset, Figure \ref{Fig:results}-(i) shows the results for the change in number of maximal cliques (i-a), maximum clique duration from the maximal cliques (i-b), maximum cardinality (i-c) and number of iterations (i-d) with the variation of $\Delta$ and $\gamma$ values. The same for Haggle and College Message dataset are shown in Figure \ref{Fig:results}-(ii) and (iii) respectively. Now, we discuss the observations with respect to the  mentioned metrics below.

\begin{figure*}[!htbp]
\centering
\begin{tabular}{ccc}
\textbf{\underline{Infectious}} & \textbf{\underline{Haggle}} & \textbf{\underline{College Message}} \\
\includegraphics[width=6cm,height=5.5cm]{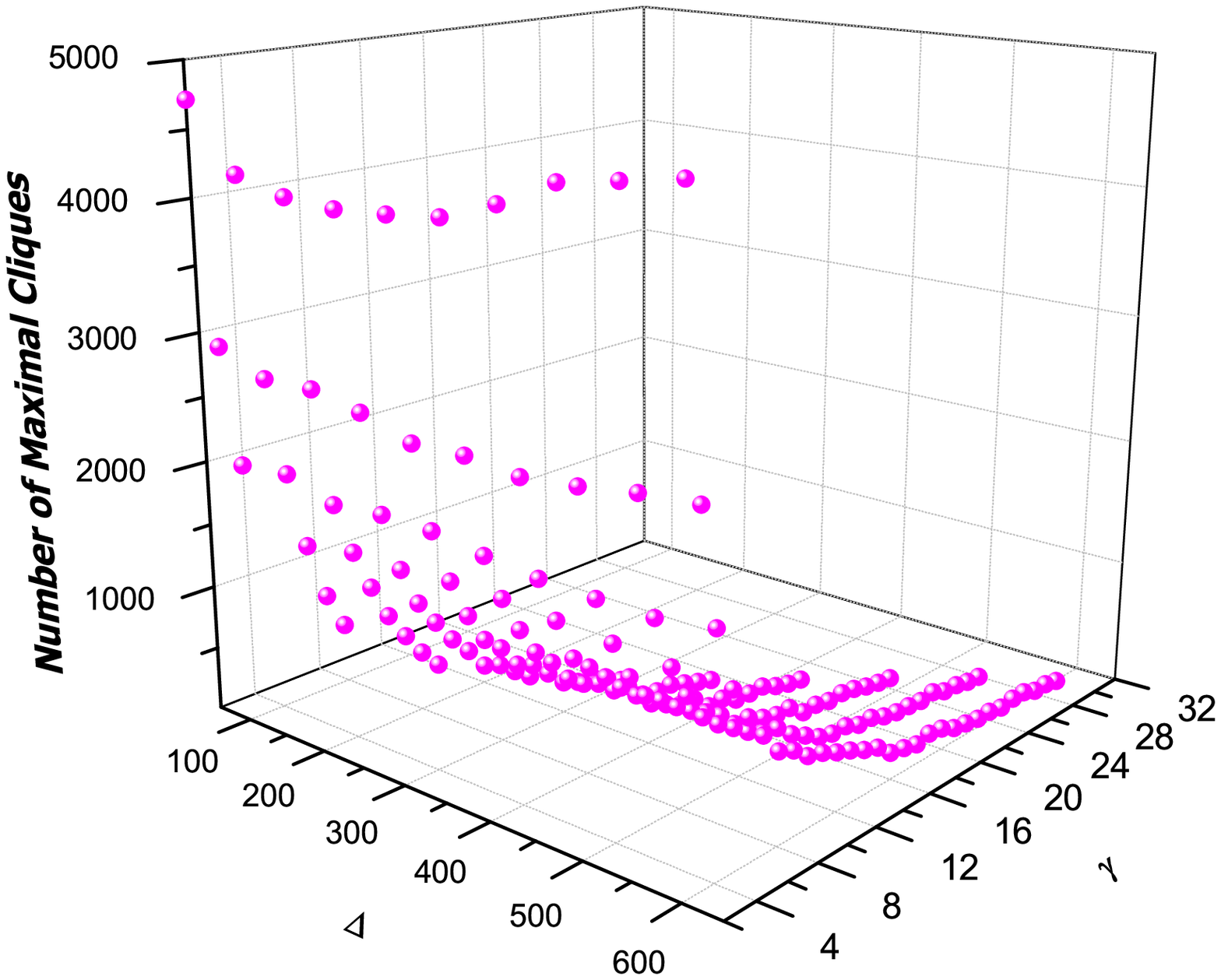} 
 &
\includegraphics[width=6cm,height=5.5cm]{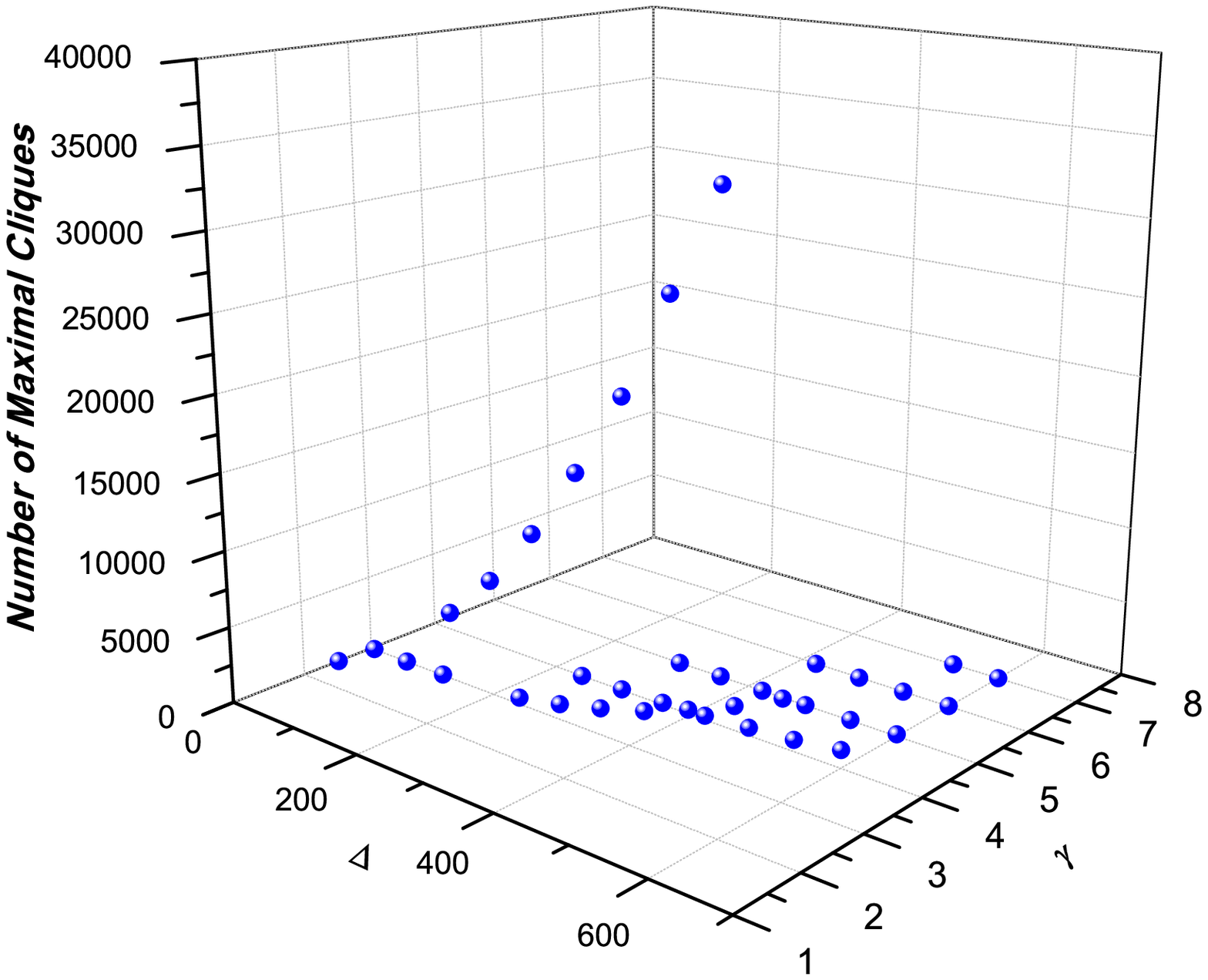}
&
\includegraphics[width=6cm,height=5.5cm]{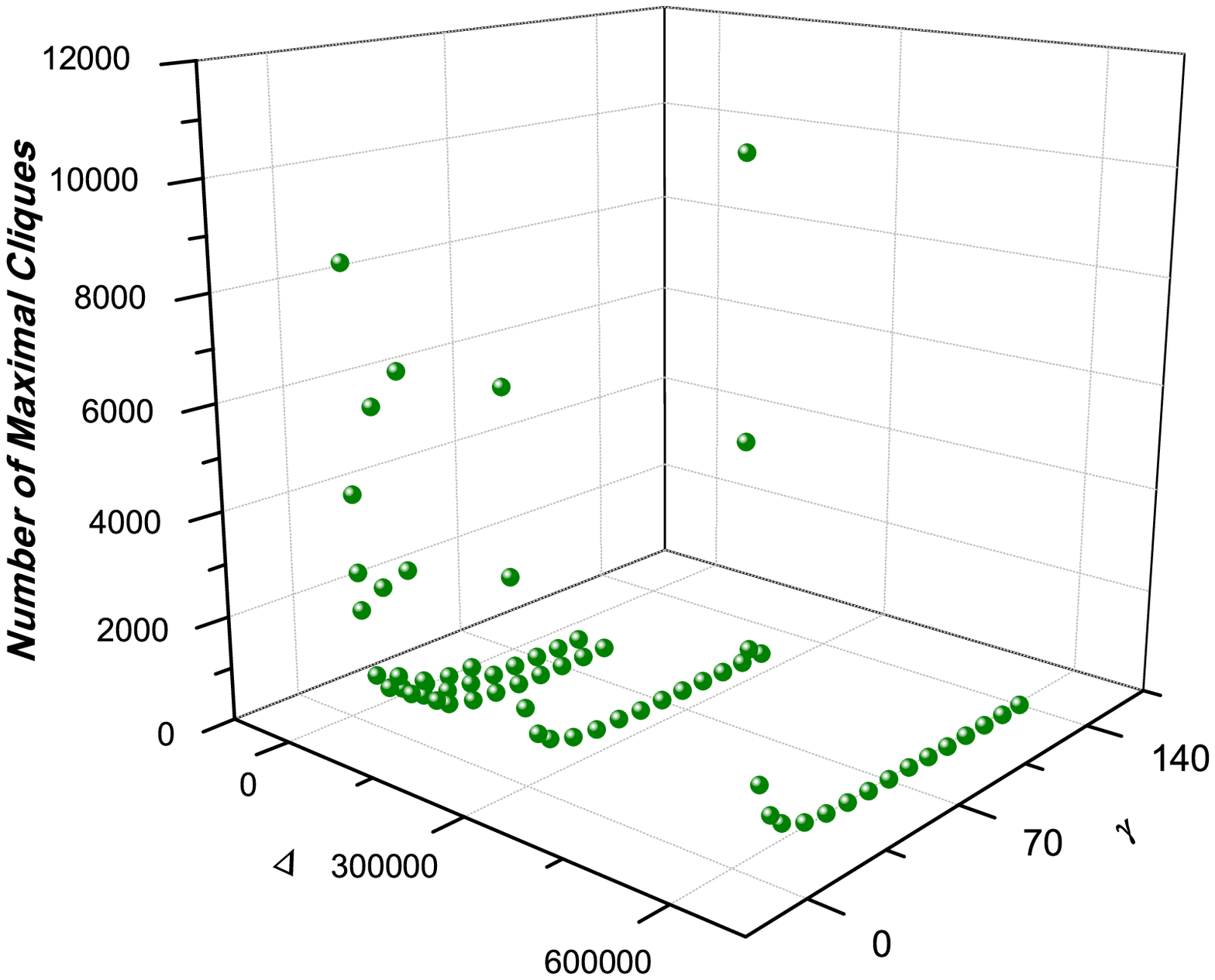}
\\
(i-a)& (ii-a)&(iii-a)\\

 \includegraphics[width=6cm,height=5.5cm]{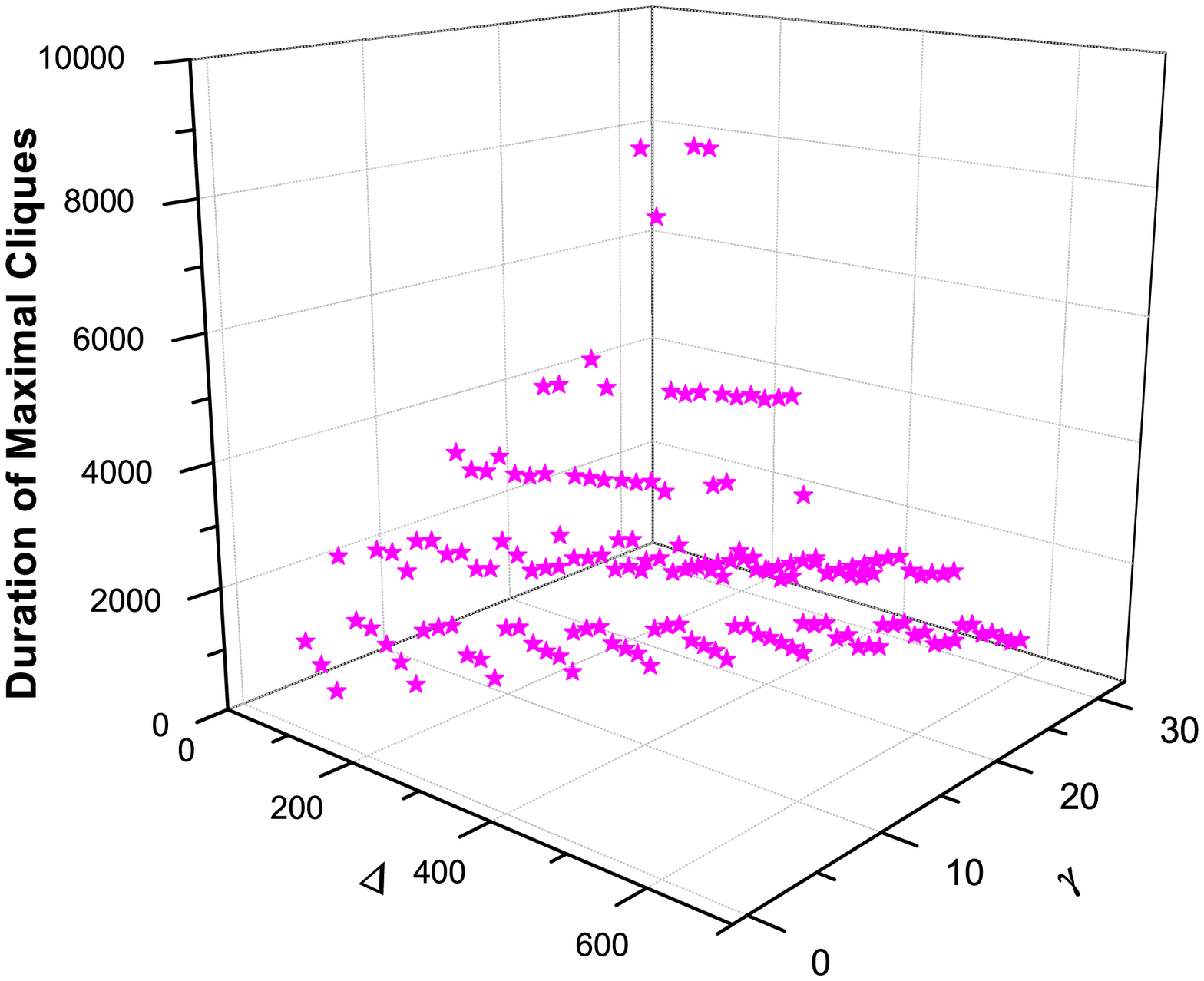} 
 &
 \includegraphics[width=6cm,height=5.5cm]{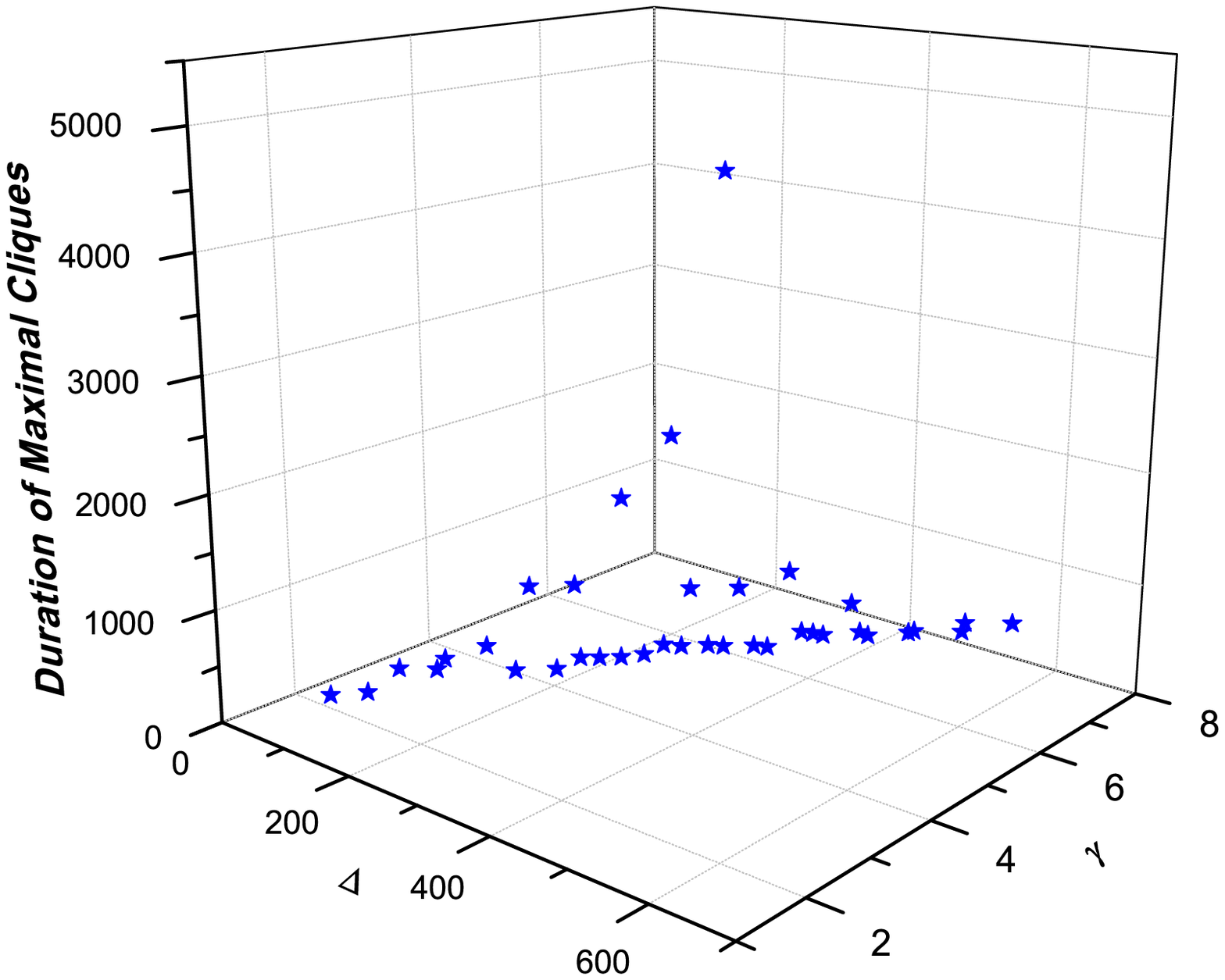} 
 &
 \includegraphics[width=6cm,height=5.5cm]{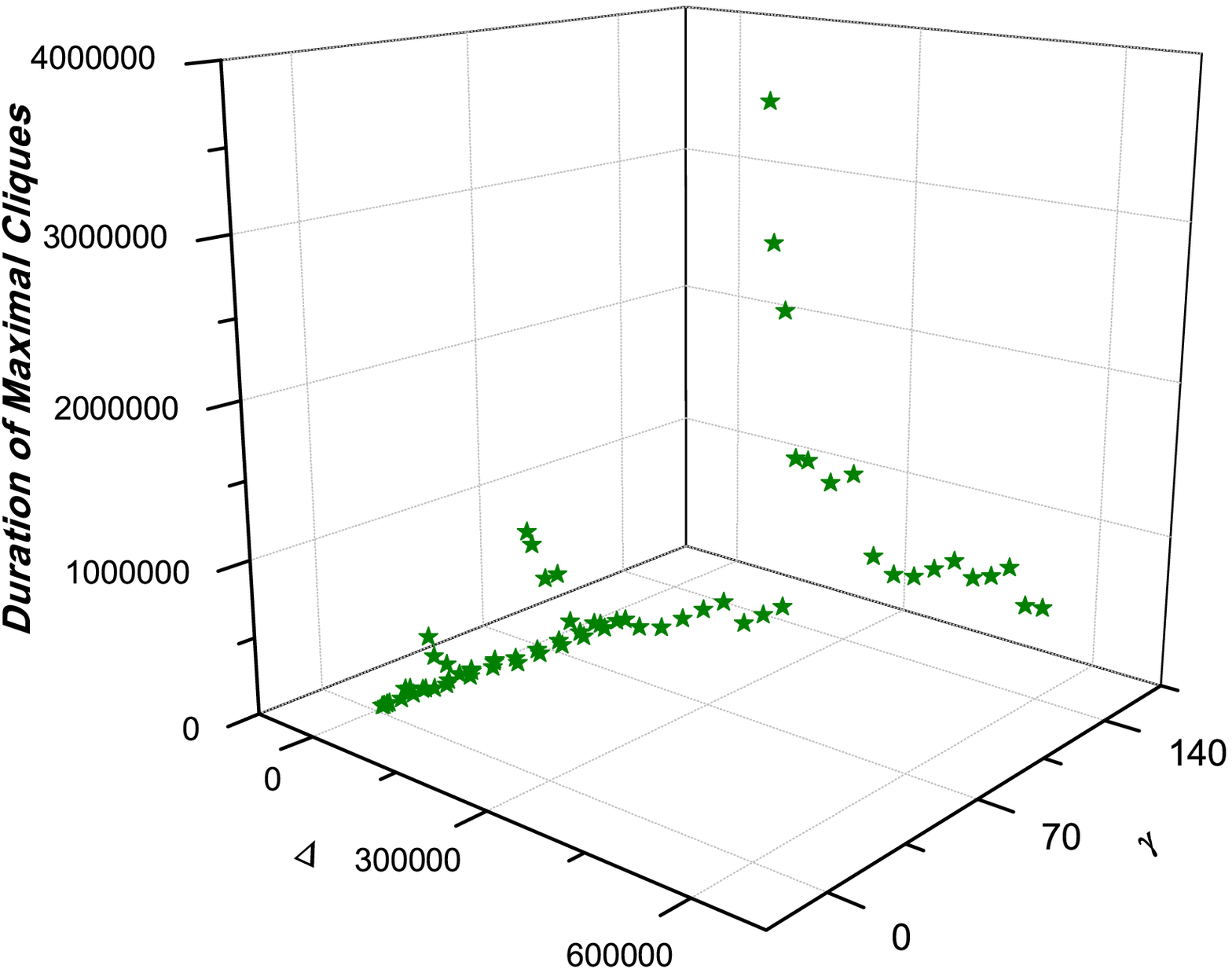}
 \\
(i-b)& (ii-b)&(iii-b)\\

  \includegraphics[width=6cm,height=5.5cm]{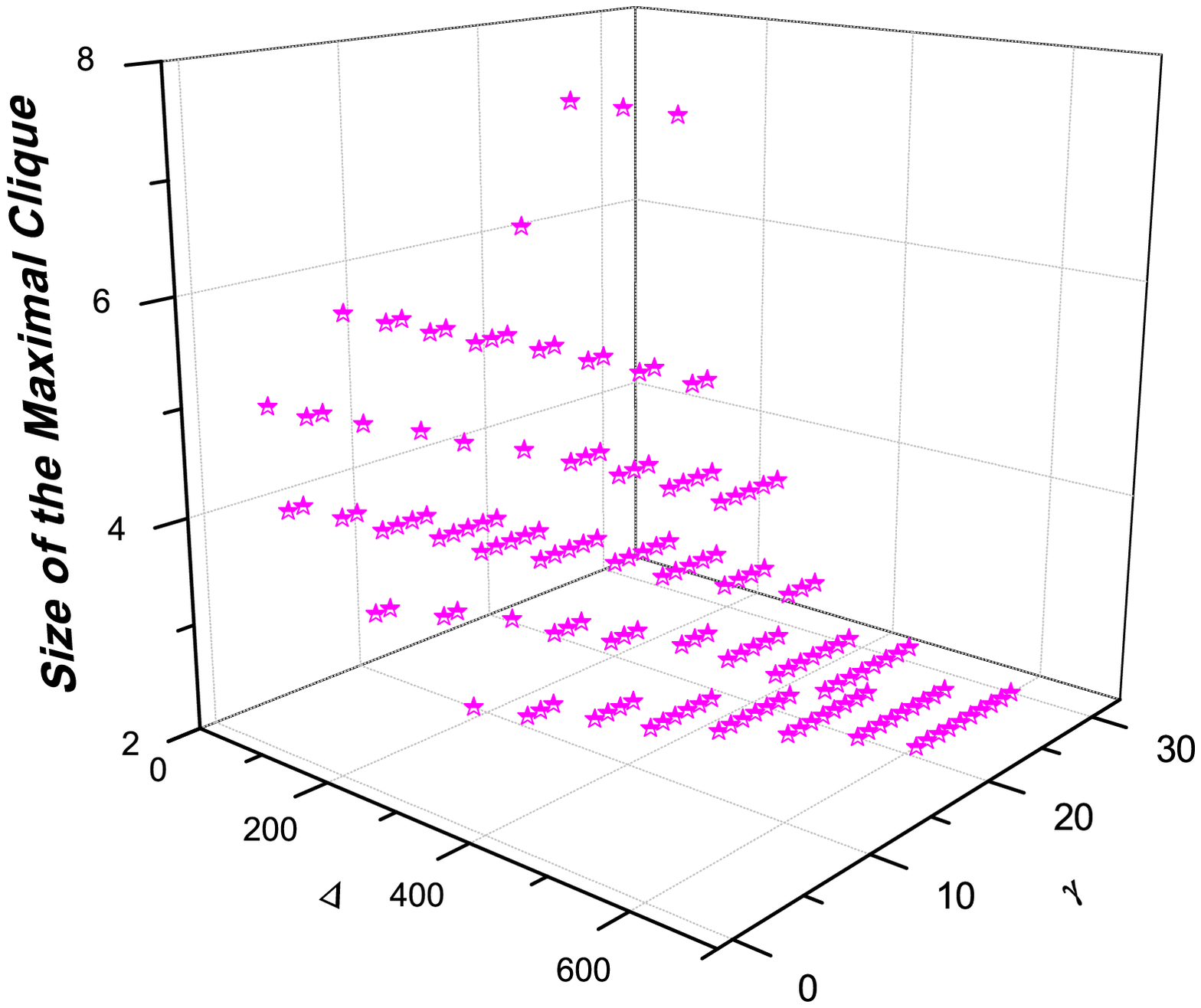}
   &
   \includegraphics[width=6cm,height=5.5cm]{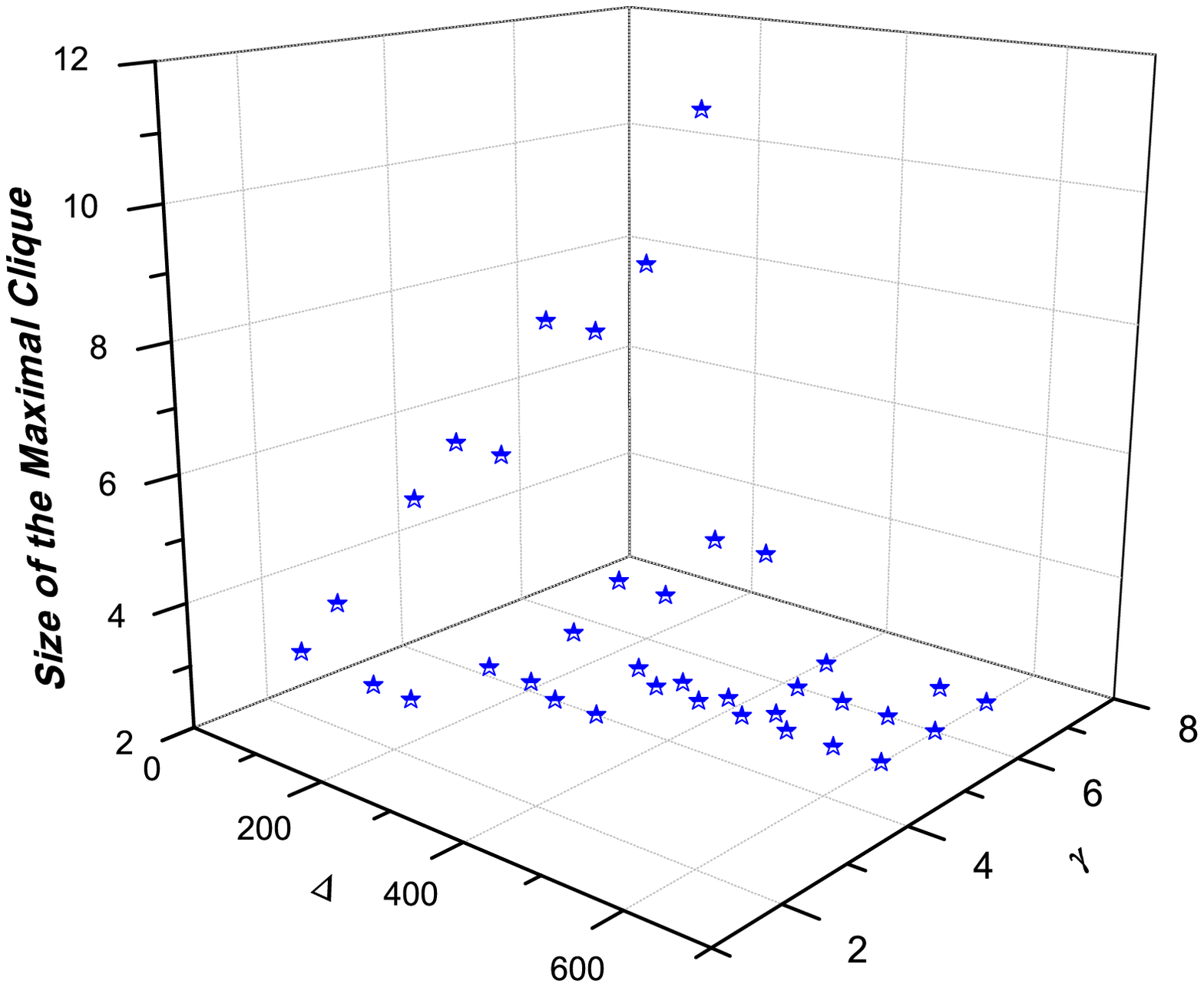}
   &
   \includegraphics[width=6cm,height=5.5cm]{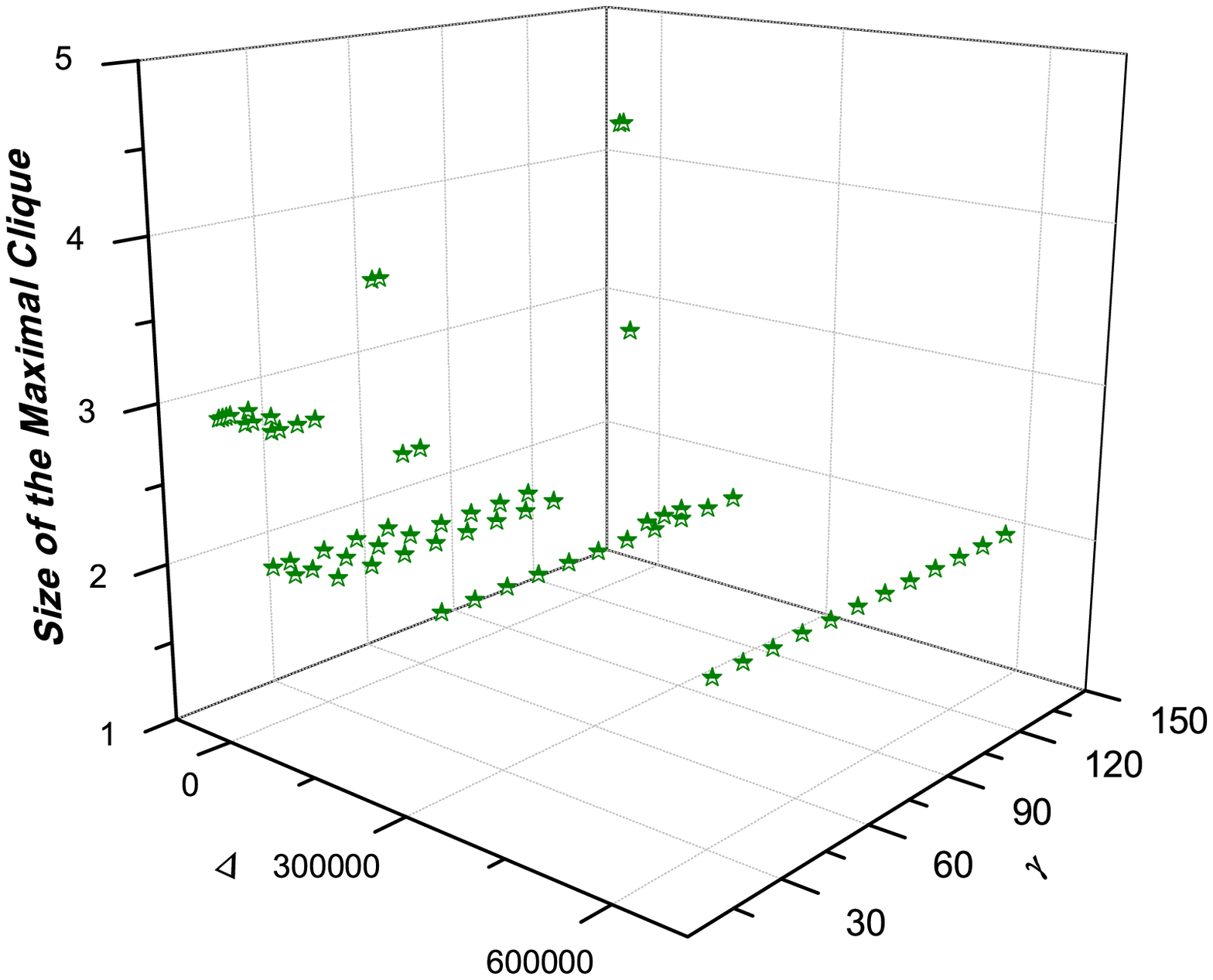}
  \\
  (i-c)& (ii-c)&(iii-c)\\

     \includegraphics[width=6cm,height=5.5cm]{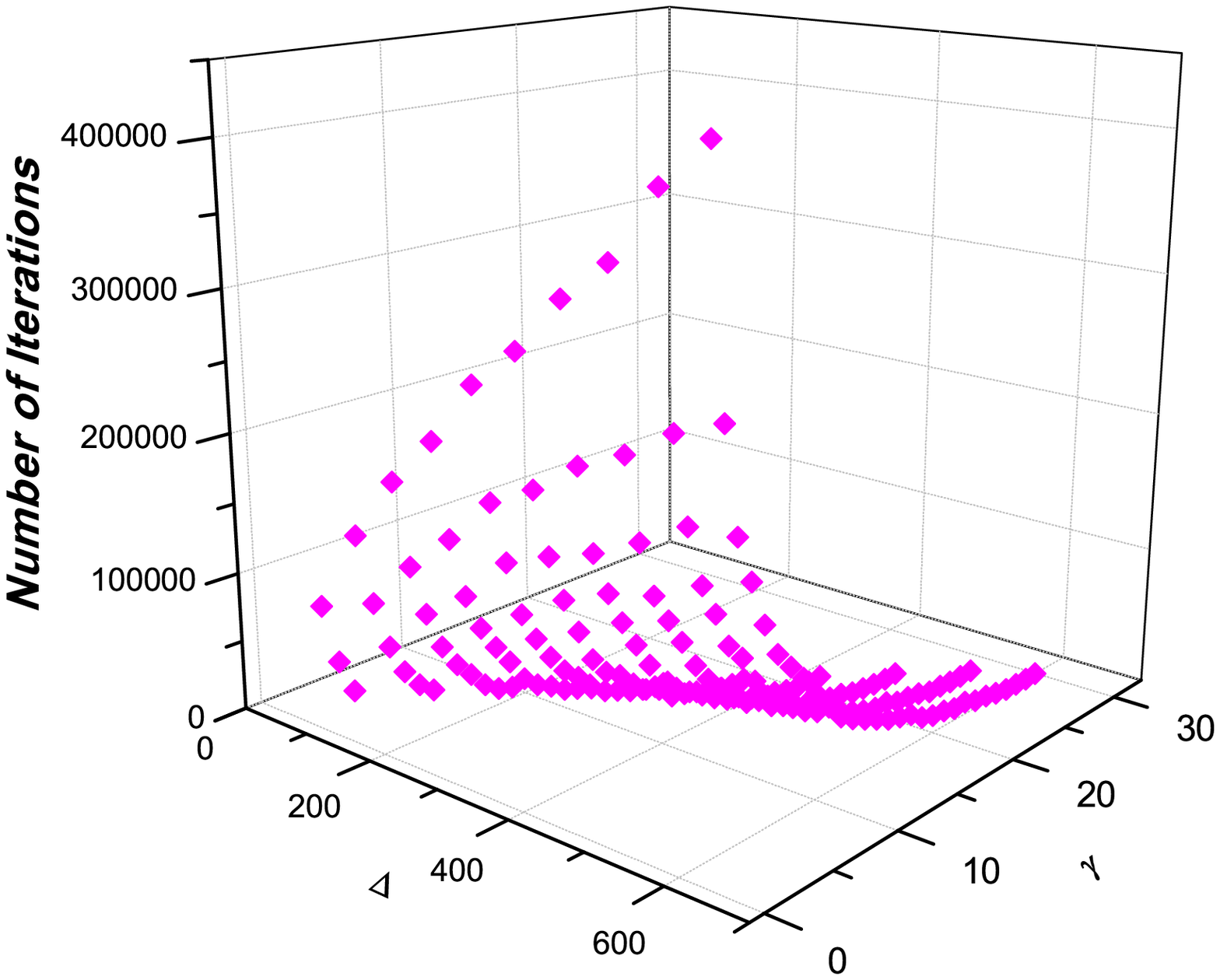} 
  &
  \includegraphics[width=6cm,height=5.5cm]{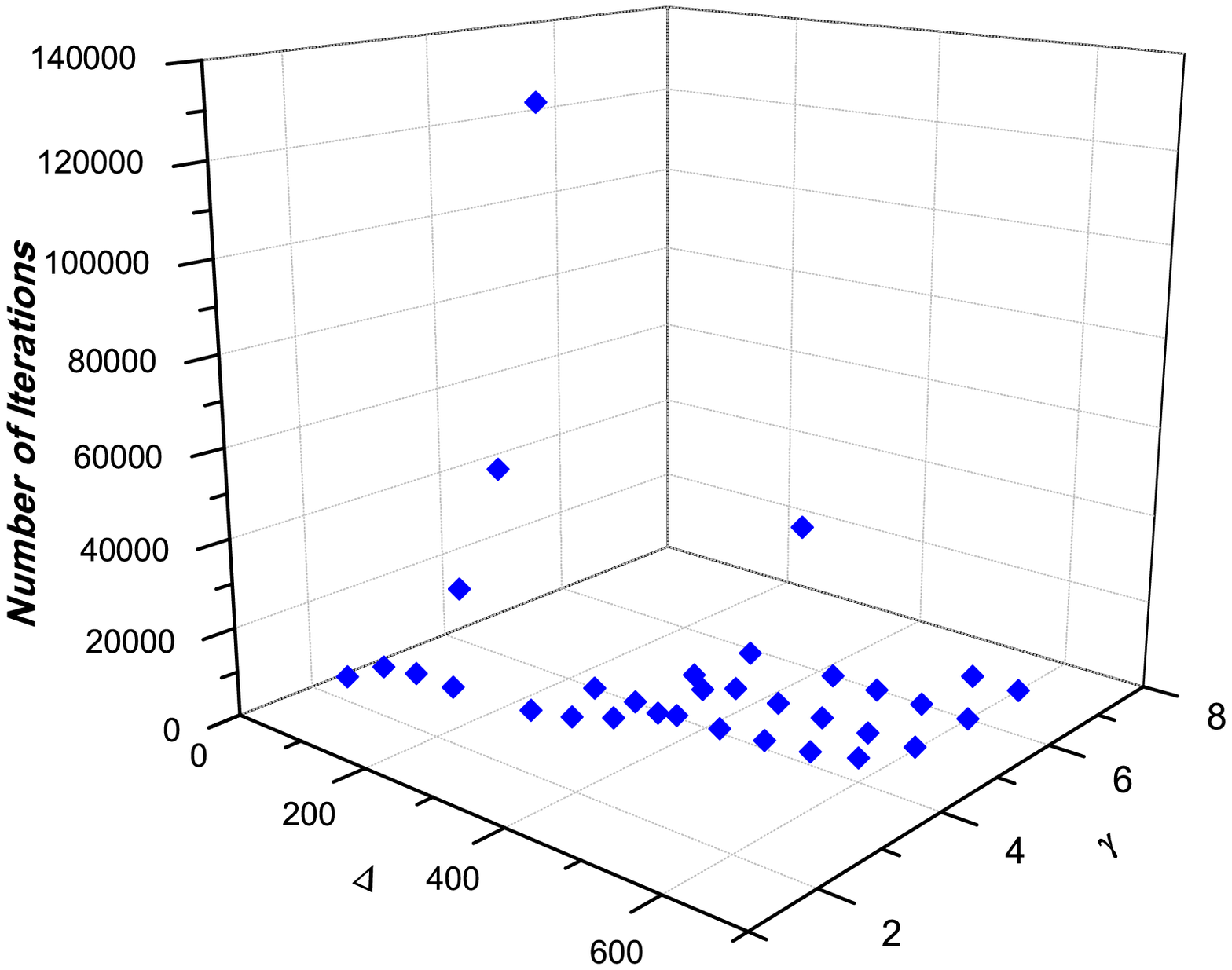} 
  &
  \includegraphics[width=6cm,height=5.5cm]{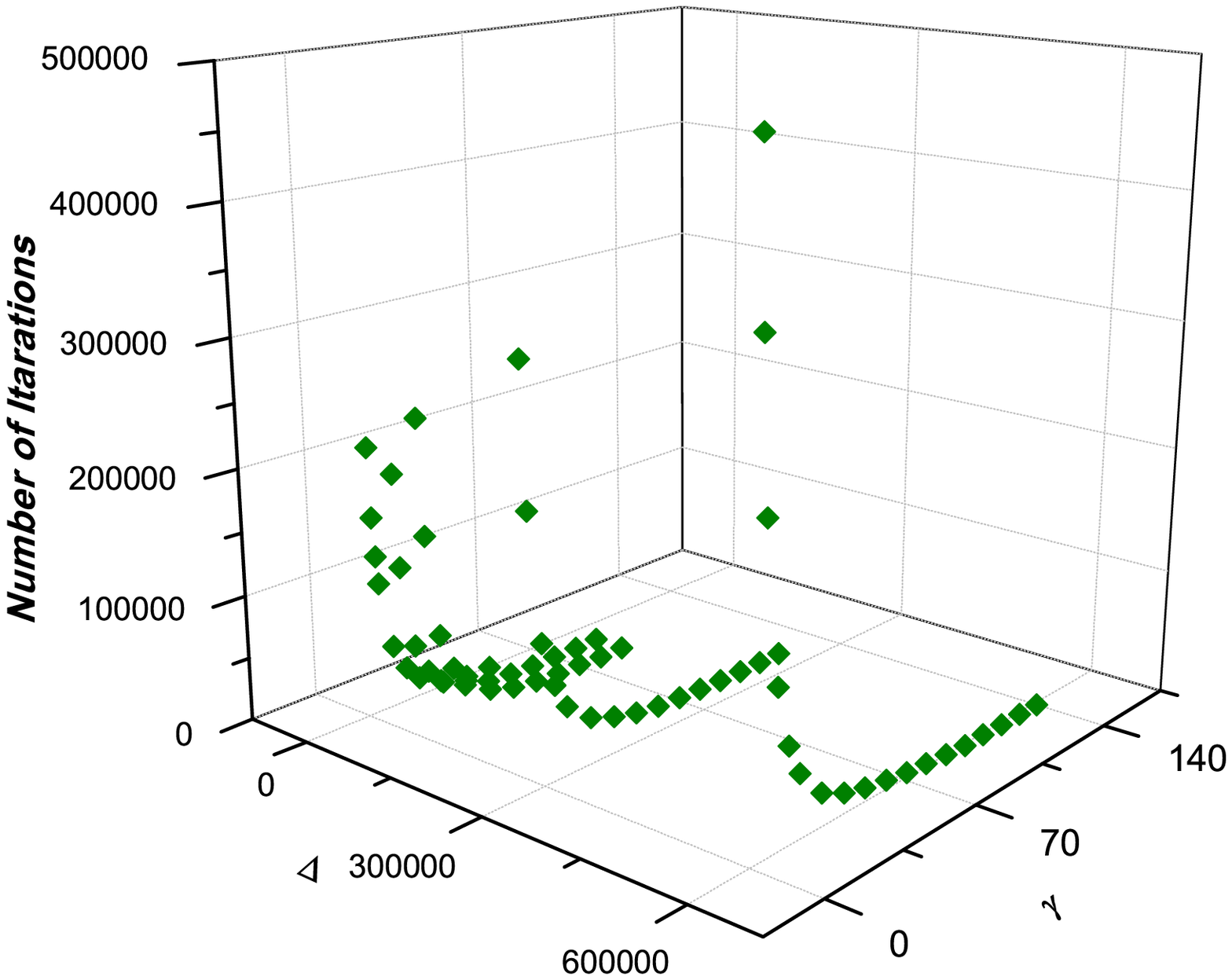} 
   \\

   (i-d)& (ii-d)&(iii-d)\\

\end{tabular}
\caption{Results obtained using Infectious Dataset in (i), Haggle Dataset in (ii) and College Message Dataset in (iii). Figure (a), (b), (c) and (d) respectively plots the Number of Maximal Cliques, Maximum Clique duration, maximum cardinality and Number of iterations in the maximal clique set for different $\Delta$, $\gamma$ values.}
\label{Fig:results}
\end{figure*}

\par For Infectious and College Message datasets, we observe that the maximal cliques are more in number when $\Delta$ and $\gamma$ values are lower. The similar pattern also exists in Haggle dataset. However, due to the abrupt change in number of maximal cliques with the increment of $\gamma$ in each $\Delta$, look wise Figure (ii-a) is slightly different from the other two. We also observe that for a particular $\Delta$, there is exponential decrease in number of maximal cliques with the growth of $\gamma$. Also, it is important to note, that for any $\Delta$, theoretically there will be no maximal $(\Delta, \gamma)$\mbox{-}Clique when $\gamma > \Delta+1$. The supportive observation is found in our experimentation. As an example, for Infectious dataset for $\Delta=300$, there is no maximal $(\Delta, \gamma)$\mbox{-}Clique beyond $\gamma=16$ (Each link has been captured in every 20 seconds. So, $\gamma$ can reach to maximum 300/20+1=16)

\par For all the datasets, larger $\Delta$ leads to increase in the maximum duration which is obvious. The maximum duration decreases with the increase of clique cardinality. For a fixed $\Delta$, the incremental change in $\gamma$ results in smaller value of the maximum duration. However, the change is not exponential as the previous metric (count of the maximal cliques). For Infectious and Haggle dataset, we observe similar plots in Figure \ref{Fig:results} (i-b) and (ii-b) respectively. In College Message dataset, due to selection of nonuniform $\Delta$ values, the change in maximum duration is very high in Figure (iii-b).

\par Next, we want to observe the maximum cardinality of the maximal cliques, which signifies at most how many persons contacted most frequently (preserving at least $\gamma$ times) for each $\Delta$ time interval. All the datasets exhibit following natural patterns, increase and decrease in the maximum cardinality along with the incremental change in $\Delta$ (for fixed $\gamma$) and $\gamma$ (for fixed $\Delta$) respectively. 

\par Now, the number of iterations to reach the maximal clique set is dependent on two parameters, one is the number of maximal cliques and second is the delta value for all the datasets. The plots in the first and fourth row of Figure \ref{Fig:results} are correspondingly almost identical with the change in $\gamma$ for a fixed $\Delta$ as it is comparable with number of maximal cliques. However, with the increase of $\Delta$ for fixed $\gamma$, there is almost linear increase in the number of iterations due to larger size of the initial clique set.

\section{Conclusions and Future Directions}\label{CFD}
In this paper, we have introduced the concept of $(\Delta, \gamma)$\mbox{-}Clique of a temporal network and proposed a methodology for enumerating all such maximal cliques. We have also used this methodology for analyzing three human contact network datasets captured in different situations. Now, this work can be extended in several directions. First of all, the analysis that we have done for our proposed methodology is not tight. Hence, a sophisticated analysis can be provided for the developed algorithm. Secondly, the the problem of enumerating maximal $(\Delta,\gamma)$\mbox{-}Cliques can be extended for uncertain graphs, where along with the time stamp, each edge also has a probability of occurrence. In reality, there could be different situations, where it might be interesting to study the interaction patterns among a group of objects for a particular duration with different level of frequencies. So, another direction of extension is to study the $(\Delta, \gamma)$\mbox{-}Clique where the context of $\gamma$ is subjective and case specific.

\bibliographystyle{alpha}
\bibliography{bare_conf}

\end{document}